\documentclass[10pt,conference]{IEEEtran}
\IEEEoverridecommandlockouts
\usepackage{cite}
\usepackage[font=small]{caption}
\usepackage{amsmath,amssymb,amsfonts}
\usepackage{subcaption}
\usepackage{graphicx}
\usepackage{textcomp}
\usepackage{algorithm}
\usepackage{algpseudocode}
\usepackage{mwe}
\usepackage{makecell}
\usepackage{textcomp}
\usepackage{comment}
\usepackage{float}
\usepackage{soul}
\usepackage{hyperref}
\usepackage{xcolor}
\usepackage{amsthm}
\usepackage{wrapfig,lipsum,booktabs}
\usepackage{multirow,multicol}
\newtheorem{definition}{Definition}

\newtheorem{lemma}{Lemma}

\newtheorem{example}{Example}

\def\BibTeX{{\rm B\kern-.05em{\sc i\kern-.025em b}\kern-.08em
    T\kern-.1667em\lower.7ex\hbox{E}\kern-.125emX}}

\newcommand{\sd}[1]{\textcolor{red}{#1}}

\def\BibTeX{{\rm B\kern-.05em{\sc i\kern-.025em b}\kern-.08em
    T\kern-.1667em\lower.7ex\hbox{E}\kern-.125emX}}

\begin{document}
\title{Mitigating Timing-Based Attacks in Real-Time Cyber-Physical Systems}
\author{Arkaprava Sain, Sunandan Adhikary, Soumyajit Dey\\
Department of Computer Science and Engineering, Indian Institute of Technology, Kharagpur, India
}


	
\maketitle
\IEEEpeerreviewmaketitle
\begin{abstract} 
\par Real-time cyber-physical systems depend on deterministic task execution to guarantee safety and correctness. Unfortunately, this determinism can unintentionally expose timing information that enables adversaries to infer task execution patterns and carry out timing-based attacks targeting safety-critical control tasks. While prior defenses aim to obscure schedules through randomization or isolation, they typically neglect the implications of such modifications on closed-loop control behavior and real-time feasibility.

\par This work studies the problem of securing real-time control workloads against timing inference attacks while explicitly accounting for both schedulability constraints and control performance requirements. We present a scheduling-based mitigation approach that introduces bounded timing perturbations to control task executions in a structured manner, reducing adversarial opportunities without violating real-time guarantees. The framework jointly considers worst-case execution behavior and the impact of execution delays on control performance, enabling the system to operate within predefined safety and performance limits. Through experimental evaluation on representative task sets and control scenarios, the proposed approach demonstrates that exposure to timing-based attacks can be significantly reduced while preserving predictable execution and acceptable control quality. 
\end{abstract}


\section{Introduction}
\label{Introduction}
Cyber Physical Systems used in various domains, such as automotive, smart grids, avionics, Industrial Internet of Things (IIoT), medical devices, etc are real time in nature. Most of these real time systems (RTS) possess \emph{safety-critical} properties, i.e any delay in their operation or failure during run-time can result in severe consequences, such as loss of life or damage to the environment/infrastructure. 
Over the past decade, the security of RTS has become very important for the following reasons: (i)  Deployment of mixed-criticality applications~\cite{giannopoulou2013scheduling}, where tasks of multiple safety and security requirements share the same computing platform; (ii) Constraints on computational resources (e.g., CPU, memory, energy) limit the use of heavy duty cryptographic data authentication and integrity verification mechanisms; (iii) Incorporation of \emph{commercial off-the-shelf} components sourced from various third-party vendors/OEMS; (iv) Connectivity to external networks, including the internet, which introduces new vulnerabilities through remote access and frequent software updates~\cite{humayed2017cyber,brooks2008automotive}. While these have improved the functionality of applications and reduced deployment costs, they also significantly broaden the \emph{attack surface}. As a result, there has been a rise in cyberattacks targeting cyber-physical systems (CPS)~\cite{humayed2017cyber}.
\\
\noindent $\bullet$ \textit{\textbf{Timing Side-Channel and Schedule Based Attacks: }}To meet the stringent timing requirements and ensure reliability, the system designers take care to ensure
that the tasks within the application software execute in a \emph{deterministic} manner, i.e. to exhibit well-defined timing characteristics such as fixed activation intervals, bounded execution times, and minimum jitter, ensuring predictability and schedulability during run-time~\cite{huang2021detection,gandolfi2001electromagnetic}. Typically, a preemptive fixed-priority (PFP) scheduler is used to achieve static and deterministic task scheduling.
However, this inherent deterministic property of the system makes it vulnerable to \emph{timing-side channels}. By exploiting the deterministic nature of task schedules, an adversary can gain access to critical system information during run-time, such as temperature, safety-critical task run-time information, electromagnetic spikes, etc.~\cite {gandolfi2001electromagnetic}. In particular, the authors in ~\cite{chen2019novelsidechannel} demonstrated that by gaining access to a vulnerable task of the software, typically a \emph{untrusted, lower priority task}, an adversary can predict the future arrival instances of a safety-critical task. Since a static fixed-priority schedule is followed, the job execution sequence repeats every hyperperiod, which helps the attacker infer the initial offsets of the safety-critical task and
predict their future arrival times. The attacker can manipulate data inside the victim task’s shared I/O device buffer within a time window around those inferred future arrival time instances~\cite{chen2019novelsidechannel,schedguard++}. These vulnerable intervals, known as the Attack Effective Window (AEW), depend on system implementation details and can be experimentally characterized by the attacker~\cite{schedguard++}.
\\
\noindent $\bullet$ \textit{\textbf{Related Works: }}Several approaches have been proposed in the literature to prevent and mitigate SBAs. Broadly, these efforts can be categorised into the following three areas:
(i)\emph{ Schedule Randomization :} As the name implies, this technique randomizes the task execution schedule to mask predictable periodic patterns. The process involves dynamically computing the available slack time for jobs in the \emph{ready queue} and introducing controlled variations in their release or execution times, thereby reducing predictability and making it harder for an attacker to infer the system’s schedule~\cite{yoon2016taskshuffler,kruger2021randomization,kruger2018vulnerability}. However, as shown by the work~\cite{nasri2019pitfalls, sain2023work, sain2025maars}, \emph{attack-unaware randomization} may increase the vulnerability of safety-critical tasks and lead to a higher number of context switches, potentially degrading overall system performance. (ii) \emph{Temporal Isolation:} This technique enforces strict separation of CPU cycles between safety-critical and untrusted tasks. In~\cite{schedguard++}, untrusted tasks are blocked after the victim task completes. However, for larger task sets, this can lead to underutilization of CPU cycles.
(iii) \emph{Differential Privacy Techniques: } In this line of work, researchers have proposed leveraging differential privacy to obfuscate real-time task schedules and thereby mitigate scheduler side-channel attacks. A notable approach is the injection of Laplace-distributed  time delays into the scheduling 
process, as introduced by~\cite{chen2021indistinguishability}. 
However, while noise injection following a Laplace distribution can obscure a schedule to a potential adversary, the added randomness 
can increase the risk of deadline misses in hard real-time and safety-critical systems where strict timing guarantees are important. 
\par \noindent $\bullet$ \textit{\textbf{Novelty and Contributions: }} Current state-of-the-art research works have exclusively focused on schedule modification for reducing side-channels which reveal schedule information. However, none of them take into consideration the effect of such mitigation strategies on safety-critical control applications/tasks. Such tasks influence the physical dynamics of systems under control, and security-aware modification in their scheduling can have adverse effects on the performance of closed-loop systems.
To address these limitations, we propose a novel framework, \emph{Controlled Delay-Aware Secure Scheduling (SecureRT)}. The framework identifies the victim safety-critical control task in the system, and provides suitable \emph{job-level delays} to victim task instances to ensure security against timing-side channel attacks. As discussed earlier, the success of an SBA relies heavily on whether the untrusted tasks can execute within the AEW after the completion of the victim tasks. Therefore, it is important to determine an optimal job-level delay sequence that preserves the \emph{schedulability} of the task set, maintains the \emph{performance} of the victim control task under attack and guarantees ~\emph{security} against SBAs by minimizing the temporal overlap between the victim task's AEW and all untrusted task's executions.  
By reducing this temporal overlap and compensating for job-level delays, we preserve the performance of a real-time control task while simultaneously reducing its attack surface.
\begin{figure}[!htbp]
    \centering
    \vspace{-3mm}
    \includegraphics[width=0.48\textwidth,clip]{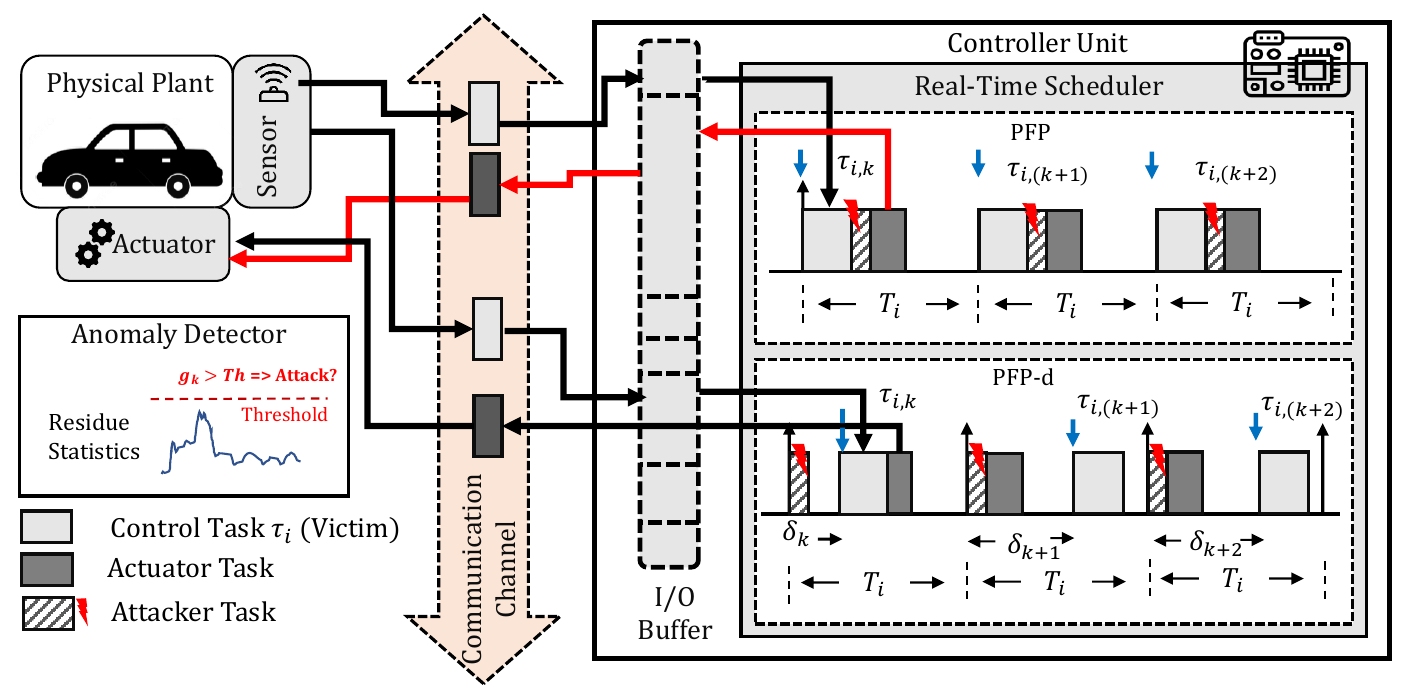}
    \caption{Overview of the System Model}
    \vspace{-2mm}
    \label{fig:sysmodel}
\end{figure}
We now summarize the main contributions of our work as follows.\\
\textbf{\textit{(1)} }We compute an upper bound on the job-release delay by performing a conservative worst-case response time (WCRT) analysis for the delayed victim task instances. We further derive a maximum admissible delay for which a delay-aware controller can still be synthesized with desired performance criteria.\\
\textbf{\textit{(2)} }We formulate an optimization problem to generate optimal job-level delay sequences for all control tasks, which minimizes the temporal overlap between AEWs of victim control tasks and all untrusted tasks.\\
\textbf{\textit{(3)} }We propose \emph{SecureRT}, a novel framework that integrates an online detection mechanism with a customized scheduler, termed as preemptive-fixed priority scheduler with delay (\emph{PFP-d}). When an attack is detected for a particular victim task, \emph{PFP-d} modifies its job release pattern by deploying the optimal job-level delay sequence to mitigate SBAs.\\
\textbf{\textit{(4)} }We evaluate the \emph{SecureRT} framework on a custom simulator that executes tasks on top of a real-time linux environment. Our experimental results demonstrate the trade-off between security and control performance under SBAs.
\section{System Model}
\label{sec:sysmodel}
\noindent $\bullet$ \textit{\textbf{Task and Scheduler Model}: } 
We consider $N$ independent periodic real-time tasks denoted by the set $\mathcal{T}= \{\tau_1, \tau_2,..., \tau_N\}$ on a single core platform scheduled by PFP scheduler. Each task $\tau_i \in \mathcal{T}$ is characterized by the tuple $(T_i, C_i, D_i)$ that states the minimum inter-arrival time (or period), worst-case execution time (WCET) and relative deadline. We assume that all tasks are independent and their deadlines do not exceed their respective periods, i.e., $D_i \leq T_i$. Given a task $\tau_i$, its corresponding higher and lower priority task sets are denoted using $hp(\tau_v)$ and $lp(\tau_v)$ respectively. We denote the set of control tasks from $\mathcal{T}$ by $\Gamma_C, \, \Gamma_C \subseteq \mathcal{T}$. The non-control tasks are denoted by the set $\Gamma_{NC} \, , \Gamma_{NC}= \mathcal{T} \setminus \Gamma_C$.
\par \noindent $\bullet$ \textit{\textbf{Control task Model}: }
Each control task $\tau_i \in \Gamma_C$ corresponds to a physical plant, which is modelled as a \emph{Linear Time-Invariant (LTI)} system with the following state-space dynamics: 
\begin{align}
\label{eq:continuous_system_equation}
\dot{x}= A_c x(t) + B_cu(t), \ y(t)= Cx(t)+v(t)
\end{align}
Here, the vectors $x \in \mathbb{R}^n, \hat{x} \in \mathbb{R}^n, y \in \mathbb{R}^m$, and $u \in \mathbb{R}^p$ describe the plant state, the estimated plant state, output and the control input, respectively. The matrices $A_c$ and $B_c$ are continuous-time matrices that define the continuous-time state and input-to-state transition matrices, respectively. $C$ is the output transition matrix. Each control task samples the sensed plant measurement vector $y$ at each sampling iteration, using which future states of the plant are estimated. Based on the estimated plant state, a control task calculates the required control input $u$ for actuating the plant in the subsequent sampling period. Fig.~\ref{fig:sysmodel} illustrates the interaction between the physical plant and the controller unit (indicated by the black box). The data flow between controller tasks and sensor/actuator units is illustrated by black arrows, while the data flow corrupted by an attacker is denoted by red arrows
We assume that the plant is discretized with periodicity $T_i$ and the sampled measurement $y[k]$ is available at time $kT_i$. Here, $T_i$ is also the period of the control task.  Considering the total delay in control task release and execution, let the control update $u[k]$ be available at $kT_i + \delta$, $\delta \in [0, T_i)$. 
Hence the plant uses $u[k-1]$ during $[kT_i, kT_i + \delta)$ and the updated input $u[k]$ during $(kT_i + \delta, (k+1)T_i)$. Therefore, the discrete-time dynamics of the control task updates according to the following equation:
\begin{align}
\label{eq:delayed_system}
x[k+1] = \Phi x[k] + \Gamma_1(\delta)\,u[k-1] + \Gamma_0(\delta)\,u[k]
\end{align}
where $\Phi = e^{A_c T_i}$, $\Gamma_0(\delta) = \int_{0}^{T_i - \delta} e^{A_c s} B_c \, ds$, and $\Gamma_1(\delta) = \int_{T_i - \delta}^{T_i} e^{A_c s} B_c \, ds$.
For small actuation delays $(\delta < T_i)$, the exponential terms can be approximated through first-order linearization, i.e.
\begin{align}
\Phi=I+ A_cT_i, \
\Gamma_0(\delta) \approx (T_i-\delta)B_c, \
\Gamma_1(\delta) \approx \delta B_c.
\end{align}
Following~\cite{chakraborty2016automotive}, we construct an \emph{augmented state-space model} to capture the dependency $x[k+1]$ on both $u[k]$ and $u[k+1]$, we define the augmented state as $z[k] = [x[k],u[k-1]]^T$, which updates the states in Eq.~\ref{eq:delayed_system} as: 
$z[k+1] = \Phi_{aug}(\delta)\,z[k] + \Gamma_{aug}(\delta)\,u[k],
\ y= C_{aug}z[k]$
where, $\scriptstyle
\Phi_{aug}(\delta)=
\begin{bmatrix}\Phi & \Gamma_1(\delta)\\[-1pt]0 & 0\end{bmatrix},\;
\Gamma_{aug}(\delta)=
\begin{bmatrix}\Gamma_0(\delta)\\[-1pt]I\end{bmatrix},\;
C_{aug}=
\begin{bmatrix}C & 0\end{bmatrix}.
$
Therefore, using the augmented matrices, the state equations evolve as:
$z[k+1] = \Phi_{aug}(\delta)\, z[k] + \Gamma_{aug}(\delta)\, u[k], \ y[k] = C_{aug}\, z[k]$
Each control task employs a state estimator and a feedback controller. The state estimator is implemented using a Kalman filter, which reconstructs the system states from the available output measurements. The control input is then generated using a Linear Quadratic Regulator (LQR) based on the estimated state.
%
\begin{align}
\label{Eq:final_sys}
\hat{z}[k+1] = 
\Phi_{aug}(\delta)\, \hat{z}[k] 
+ \Gamma_{aug}(\delta)\, u[k] \\ \nonumber
+ L_{aug}\big(y[k] - C_{aug}\, \hat{z}[k]\big),
u[k] = -K\, \hat{z}[k].
\end{align}
where $L_{aug}=[L \quad 0]^T$ and $K_{aug}=[K \quad 0]$ are the Kalman gain matrix and optimal feedback gain that minimizes a finite-horizon quadratic cost function:
\begin{align}
\label{eq:lqr_cost}
J(\delta) = \sum_{k=0}^{N} \left( z[k]^T Q_{aug} z[k] + u[k]^T R u[k] \right)
\end{align}
with $Q_{aug} \succeq$ 0 and $R \succ 0$ denoting the state and control weight matrices, respectively. In later sections, we use the cost function $J(\delta)$ to establish a bound on the maximum delay $\delta \in [0, T_i)$ that guarantees the closed-loop cost remains below a predefined threshold $J^{Th}$ set by the system designer over a finite duration.
\subsection{Residue-based Anomaly Detector}
\label{sec:detector_model} 
Each of the safety-critical control loops corresponding to the control tasks is equipped with a residue-based detector that observes changes in the system residue (i.e., the difference between the sensed and estimated outputs) for anomaly detection. We consider an integrated implementation of controller and detector functionalities as part of the control task to avoid data manipulation. We use a $\chi^2$-based detector in this work. The $\chi^2$ detector utilises a normalized quadratic function of the residual to amplify and easily detect minute variations in system residue. For system residue $res[k]$ at $k^{th}$ sampling iteration, its chi-square measure is $w[k] = res[k]^T \Sigma_{res}^{-1} res[k]$ where $\Sigma_{res}$ is the variance of system residue. Considering the measurement noise to be a zero-mean Gaussian noise, $res[k] \sim \mathcal{N}(0,\Sigma_{res})\Rightarrow w[k]\sim \chi^2(m,2m)$, where $m$ is the number of output measurements, considered as the degree of freedom of the $\chi^2$ distribution. We employ a windowed chi-square detector that compares the average value of the chi-square statistic of system residue over a pre-defined time window ($N$), i.e., $g[k]= \scriptstyle {\frac{1}{N}\sum\limits_{i=0}^{N-1}w[k-i]}$ and compares it with a pre-defined threshold $Th$. The threshold is calculated to maintain a desired false alarm rate~\cite{koley2021catch}. The chi-squared detector raises the alarm, denoting an attack attempt on a certain closed loop when $g[k]> Th$ at any $k^{th}$ sampling period of that closed loop. Such light weight attack detection methods have been found to have a high true positive rate in prior works \cite{murguia2016cusum}.
\section{Attack Model}
\label{threatmodel}
In this section, we discuss in-depth how we model the schedule-based attacker and explain our assumptions about the attacker’s objectives and capabilities. An attacker may gain entry to the system by exploiting vulnerabilities present in the wireless communication protocols used within the system~\cite{koscher2010experimental, woo2014practical}. 
A remote FDI attack on a Tesla vehicle was demonstrated by Tencent~\cite{nie2017free} where they exploited Wi-Fi/cellular interfaces to compromise the Instrument Cluster, Information Display and gateway, by injecting malicious CAN bus messages. 
Similarly, in this work, we consider an FDI attack on the control tasks ($\tau_i \in \Gamma_C$) within embedded platforms through the untrusted non-control tasks. Usually, control tasks, being safety critical in nature, are implemented 
by {\em trusted} vendors and OEMs, after thorough security and functionality checks and use authentication primitives.
Whereas an untrusted task $\tau_i \in \Gamma_U$, $ \Gamma_U \subseteq \Gamma_{NC}$, is implemented by a third party and may contain vulnerable software components. 
The attacker exploits an untrusted task and launches FDI on a control task during their interaction (common buffers, I/O etc) within a fixed time window, which we define as the Attack Effective Window (AEW)~\cite{ren2023protection,schedguard++}.
\par\noindent $\bullet$ \textit{\textbf{AEW: }} 
An Attack Effective Window $\Omega_i$ denotes a  time interval following either the completion of or before the start of a victim control task ($\tau_i$) instance. During this interval, 
\begin{wrapfigure}{l}{0.56\columnwidth}
  \centering
  \vspace{-6pt}
  \includegraphics[width=\linewidth]{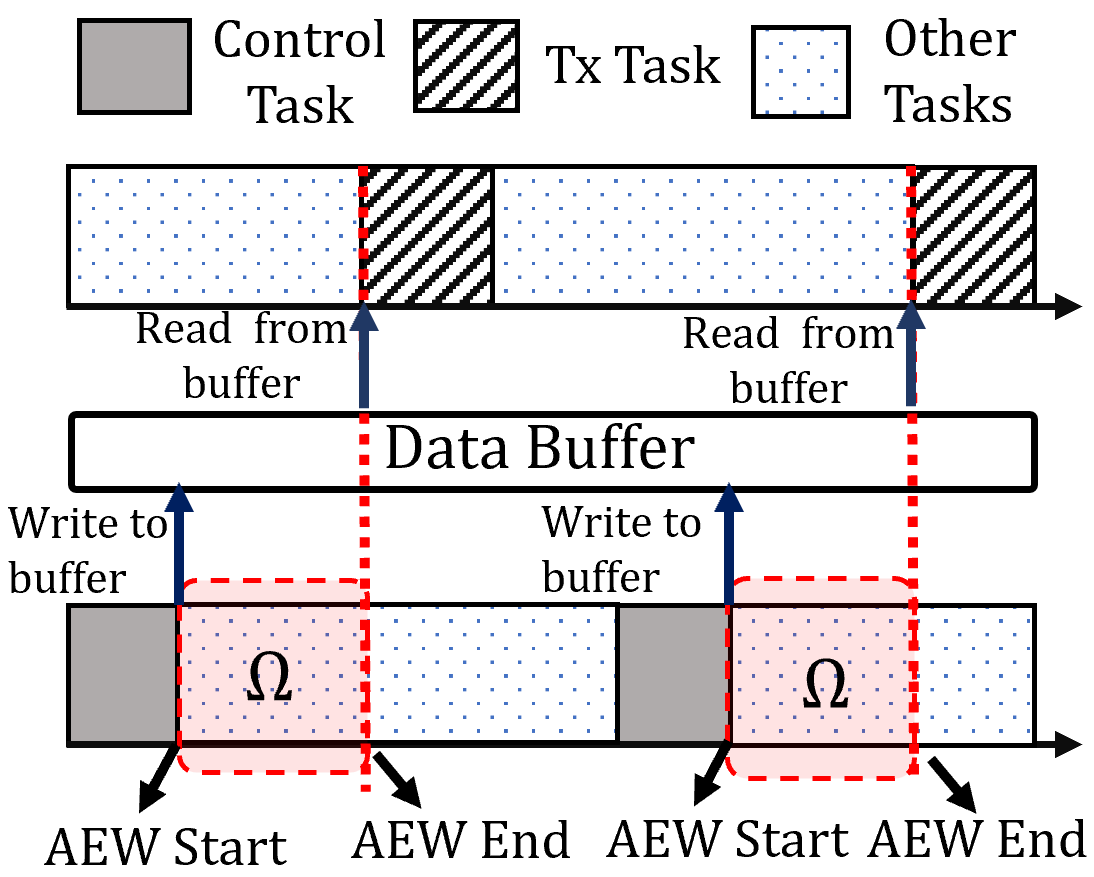}
  \caption{Attack Effective Window}
  \label{fig:aew}
  \vspace{-6pt}
\end{wrapfigure}
the attacker’s task execution can successfully modify the victim task’s data. The duration of this interval depends on the task schedule.     
In this work, we specifically focus on victim control tasks, which are typically assigned higher priorities in real-time CPSs. Therefore, lower-priority non-critical tasks often execute immediately after these control tasks within the AEW. Such cases, where the attacker executes after the victim to manipulate the control signal, are categorized as \emph{posterior attacks} in 
literature~\cite{nasri2019pitfalls,ren2023protection}. These attacks are particularly relevant for SBAs on control tasks, as they directly target the I/O or data buffer where control inputs calculated by the control task are stored after it finishes execution.
We illustrate this in Fig.~\ref{fig:aew}, where the AEW of the control task is shown shaded in red, considering a posterior attacker. AEW starts when the control task (dark shade) finishes execution, writes its data to the buffer and ends once the transmission task (striped box) starts. 
%
\par
$\bullet$ \textbf{\textit{Attacker Capabilities}: } 
 \textbf{\textit{(i)}} The task scheduler is assumed to be secure (similar to the works in~\cite{chen2019novelsidechannel,schedguard++}) and cannot be compromised by the attacker since the untrusted applications are forbidden from running with \emph{superuser/root privileges}. 
Hence, the adversary has no direct control over task execution and can only launch SBA through compromised tasks when they are scheduled by the system. 
\textbf{\textit{(ii)}} The attacker has full knowledge of the scheduling policy and can compromise a particular untrusted task $\tau_i \in \Gamma_U$. 
\textbf{\textit {(iii)}} The attacker can  partially estimate the sampling rates of any control task $\tau_i \in \Gamma_C$ by monitoring either the message data packets transmitted over the network or variations in the physical system states for several hyperperiods~\cite{chen2019novelsidechannel}.  The attacker untrusted task knows only its own start and finish times initially. However, based on preemption history, as demonstrated in the study of schedule-based attacks (SBAs) on a quadcopter, the AEW for a victim can be determined by the attacker ~\cite{chen2019novelsidechannel}.  \textbf{\textit {(iv)}} The attacker can modify the data produced by the victim control task within an I/O buffer or cache, only if the compromised task executes within the AEW~\cite{chen2019novelsidechannel,schedguard++}. 
\\
$\bullet$ \textbf{\textit{Attacker's Objective}: }We assume the attacker exploits timing information exposed by the task schedule (static PFP) to launch a posterior SBA on a safety-critical control task $\tau_v \in \Gamma_C$. By observing compromised lower-priority tasks $\tau_i \in \Gamma_U$, the attacker reconstructs the victim’s periodic job release pattern and runs the attacker task within the victim’s AEW to perform an FDI that degrades the performance of a closed-loop system. In the following section, we outline our mitigation strategy to counter such attacks. 
\section{Methodology}
In this section, we provide a detailed description of our proposed \emph{SecureRT} framework, which is applied to the victim control task under a schedule-based FDI attack. 
\begin{figure}[!htbp]
    \centering
    \vspace{-2mm}
    \includegraphics[width=0.5\textwidth]{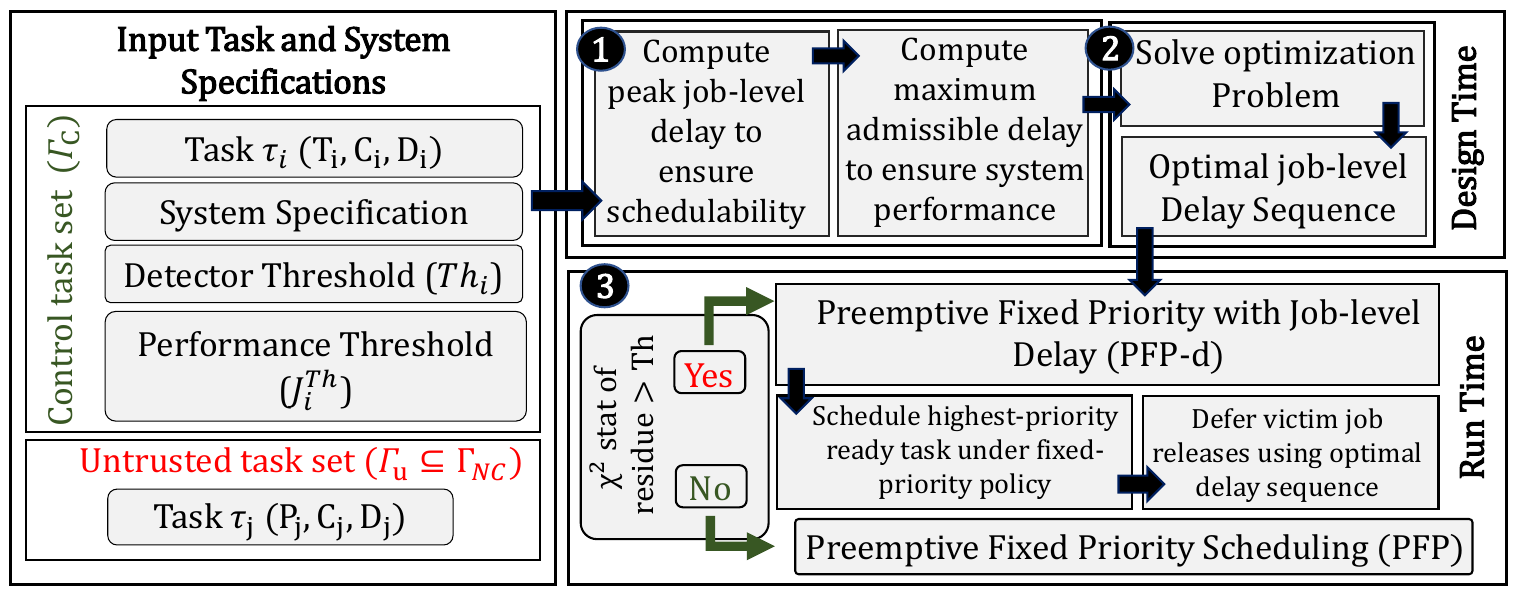}
    \caption{Overview of the \emph{SecureRT} framework}
    \vspace{-2mm}
    \label{fig:overview_methodology}
\end{figure}
A step-wise overview of the \emph{SecureRT} framework is shown in Fig.~\ref{fig:overview_methodology}, which operates through three sequential steps, which we will describe in brief. 
It takes the following inputs \emph{(i)} a task set $\mathcal{T}=\Gamma_C \cup \Gamma_{NC}$ consisting of both trusted control tasks ($\Gamma_C$) and untrusted task set ($\Gamma_U \subseteq \Gamma_{NC}$), respectively, characterized by period $T$, computation time $C$, deadline $D$, \emph{(ii)} physical system specifications corresponding to each control task $\tau_i \in \Gamma_C$, \emph{(iii)} Detector threshold $Th$ for all control tasks and \emph{(iii)} Control cost limit $J^{Th}$ set by system designer for all control tasks.
In \textbf{\em Step-1}, the framework determines the peak admissible job-level delay for each control task $\tau_i \in \Gamma_C$ using response-time analysis (RTA) to ensure that all schedulability constraints are satisfied. Following this, we compute the maximum admissible job-level delay that ensures that performance constraints are met for the corresponding control task.  
In \textbf{\em Step-2}, we formulate an MILP optimization problem to minimize the overlap between the AEW of the victim task and execution intervals of all untrusted tasks. This formulation incorporates the schedulability and performance constraints we described earlier. The solution of this MILP is an optimal set of job-delay sequences for all control tasks in the system that are stored offline (for later application). 
Finally, in \textbf{\em Step-3}, the job-level delay sequences stored offline are applied to the release times of the victim control task, and scheduling is done with \emph{PFP-d}. The resulting optimal task schedule eliminates predictable timing patterns, which the attacker might have used to launch an SBA. Moreover, it reduces the success rate of schedule-based FDI attack attempts in subsequent hyperperiods, while ensuring that control performance degradation remains within limit set by system designer. In the following sections, we discuss each of the system design steps and the run-time algorithm in detail.
\subsection{Peak Job-level Delay Derivation using RTA}
\label{sec:RTanalysis}
In RTS with PFP schedulers, determining whether each task can meet its respective deadline under worst-case conditions is important to ensure its timeliness. 
The \textit{Worst-Case Response Time (WCRT)} of a real-time task $\tau_i$ is computed for this purpose~\cite{liu1973scheduling}. It is the maximum time elapsed between the release of its job and its completion. Given any sporadic or periodic task $\tau_i$ scheduled on a uniprocessor under fixed-priority preemptive scheduling, the WCRT $R_i$ can be computed using the following recurrence relation:
\begin{equation}
\label{eq:wcrt}
R_i^{(n+1)} = C_i + \sum_{\tau_j \in hp(\tau_i)} \left\lceil \frac{R_i^{(n)}}{T_j} \right\rceil C_j
\end{equation}
Starting with $R_i^{(0)} = C_i$, the iteration proceeds until the response time converges i.e. $R_i^{(n+1)} = R_i^{(n)}$) or exceeds the task deadline ($R_i > D_i$), indicating that $\tau_i$ is unschedulable.
\par We first formally define the notion of \emph{job-level delays}, which represent temporal offsets applied to the release times of job instances. 
\begin{definition}[Job-Level Delay]
\label{def:job-level-delay}
Let $T_v$ denote the period of a safety-critical control task $\tau_v$, and let $H$ represent the hyperperiod of the task set, such that $N = H / T_v$ corresponds to the total number of releases of $\tau_v$ within one hyperperiod. The \textit{job-level delay sequence} of $\tau_v$ is defined as $\Delta_v[N] = \{\delta_1, \delta_2, \ldots, \delta_N\}$, 
where each $\delta_j$ represents a temporal offset applied to the $j$-th job instance of $\tau_v$. Accordingly, the modified release time of the $k$-th job is given by:
\begin{align}
    r'_{v,k} = r_{v,k} + \delta_j, \quad \forall k \ge 0,\ j = (k \bmod N) + 1,
\end{align}
where $r_{v,k}$ denotes the nominal release time, and the sequence $\Delta_v[N]$ is applied cyclically across job instances within each hyperperiod.
\end{definition}
We provide a strict upper bound on the delay $\delta_j<D_v-C_v, \ \forall j \leq N$, to prevent another job release with the current job pending. 
 In the following subsections, we will analyse the schedulability of both control and non-control tasks using Eq.~\ref{eq:wcrt}.
 Thereafter, we compute the WCRT of $\tau_i$, and establish a safe upper bound on its job-level delay that ensures $\tau_i$ continues to meet its timing guarantees, 
 even in the presence of perturbations to its job release pattern. Additionally, we calculate the WCRT of other non-victim tasks within the system.
\noindent \subsubsection{Response Time Analysis of Victim Task}
\label{sec:victim_schedulability}
\par In order to compute the worst-case response time (WCRT) of the $k$-th job of the victim task $\tau_v$, that has been delayed by $\delta$, we must account for all possible sources of interference that may delay its execution. Since there is no resource sharing between tasks, we can safely assume that tasks with priority lower than the victim control task $lp(\tau_v)$ will 
\begin{figure}[htbp]
\vspace{-5mm}
    \centering
\includegraphics[width=0.95\columnwidth,clip]{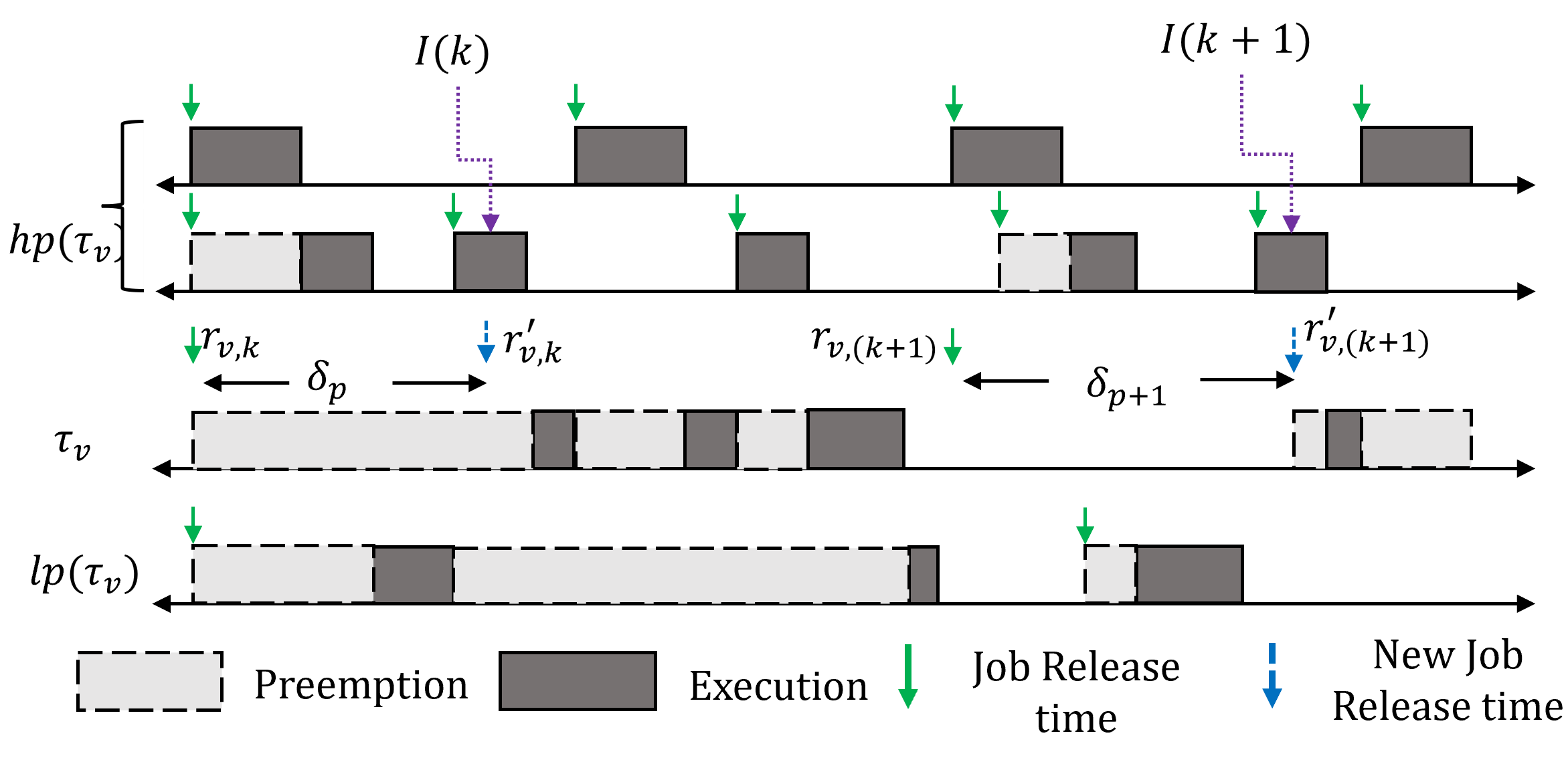}
    \caption{Interference faced by victim $\tau_v$ from $hp(\tau_v)$}
    \vspace{-3mm}
    \label{fig:interference}
\end{figure}
not block $\tau_v$. Note that jobs of $hp(\tau_v)$ that were released before $r'_{v,k}$ but have finished by $r'_{v,k}$ do not interfere with the victim's job released at time $r'_{v,k}$. Therefore, our analysis only involves identifying the \emph{carry-in} jobs from $hp(\tau_v)$ that were released \textit{before} the $r'_{v,k}$ and are still executing (when the victim's job releases after delay), as well as higher-priority jobs released \textit{after} $r_{v,k}$. We perform the RTA in two distinct steps as follows.
\textbf{(i)} We compute the interference $I(k)$ from jobs of $hp(\tau_v)$ that were released prior to the $k$-th instance of $\tau_v$, but remain unfinished at it's arrival. Such jobs are typically referred to as \emph{carry-in} jobs. \textbf{(ii)} We perform an overall WCRT analysis for the $k$-th instance of $\tau_v$. This involves computing the cumulative interference from all higher-priority tasks in $hp(\tau_v)$ that are released at or after the release time of the $k$-th job. The interference is accounted for during the entire \emph{response interval} of the job, defined as the time span between its release and completion.
\\
\noindent \textbf{(i) Computing Interference \(I(k)\):}
We denote the interference from \emph{carry-in jobs} for the $k$-th victim instance as $I(k)$, which means interference from jobs of higher-priority tasks that were released \textit{before} the release time $ r'_{v,k}$ of the victim job but are still \textit{executing at} time $r'_{v,k}$, i.e.  $r_{i,j} < r'_{v,k} \land r_{i,j} + C_i > r'_{v,k}$. 
\begin{lemma}
Given a uniprocessor system scheduled with PFP, let $\tau_v$ be a victim control task and $\tau_{v,k}$ denote its $k$-th job, released at time $r'_{v,k}$ with a delay $\delta$. The total interference $I(k)$ from the carry-in jobs of $hp(\tau_v)$ that were released before $r'_{v,k}$ and are still executing at $r_{v,k}$ is given by:
\begin{align}
\label{Eq:interference_S1}
I(k) = 
\sum_{\tau_i \in hp(\tau_v)} 
\max \!\scriptstyle{\left(
0, \ \Big\lceil \frac{r'_{v,k}}{T_j} \Big\rceil - \Big\lfloor \frac{r'_{v,k} - C_j}{T_j} \Big\rfloor-1\right)} C_j.
\end{align}
\end{lemma}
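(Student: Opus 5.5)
The plan is to prove the formula one higher-priority task at a time and then sum, since interference from distinct tasks in $hp(\tau_v)$ is additive on a uniprocessor. I will assume, as in the preceding discussion, that each $\tau_j \in hp(\tau_v)$ (the index in the summand is $j$) releases jobs periodically at $r_{j,m} = mT_j$ with zero offset. I will also adopt the carry-in definition just given: job $m$ of $\tau_j$ is a carry-in job for the $k$-th victim job iff $mT_j < r'_{v,k} < mT_j + C_j$. Following that definition, a carry-in job is counted with its full $C_j$, which is conservative because it is an upper bound on its residual execution. The per-task statement to establish is therefore
\[
\#\{m \in \mathbb{Z}_{\ge 0} : r'_{v,k} - C_j < mT_j < r'_{v,k}\} \;\le\; \max\!\Big(0,\ \Big\lceil \tfrac{r'_{v,k}}{T_j} \Big\rceil - \Big\lfloor \tfrac{r'_{v,k}-C_j}{T_j} \Big\rfloor - 1\Big).
\]

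The key step is a lattice-point count. The multiples of $T_j$ in the open interval $(a,b)$, with $a = r'_{v,k}-C_j$ and $b = r'_{v,k}$, are exactly the integers $m$ with $a/T_j < m < b/T_j$. The largest admissible $m$ is $\lceil b/T_j\rceil - 1$, and the smallest is $\lfloor a/T_j\rfloor + 1$. The count is therefore
\[
\big(\lceil b/T_j\rceil - 1\big) - \big(\lfloor a/T_j\rfloor + 1\big) + 1 = \lceil b/T_j\rceil - \lfloor a/T_j\rfloor - 1,
\]
truncated at $0$ when the range is empty. This truncation is where the $\max(0,\cdot)$ comes from. One boundary issue needs checking: if $a<0$, the formula may count negative indices $m$, which do not exist. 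I would handle this by noting that the corresponding $\max$ term still upper-bounds the true count, or by using $\max(a,0)$. Since $C_j \le D_j \le T_j$ and the interval has length $C_j$, the count is at most $1$, so the bound is tight in the integer-boundary cases as well.

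Multiplying each count by $C_j$ and summing over $hp(\tau_v)$ gives the stated $I(k)$. I expect the main obstacle to be consistency of modelling rather than the algebra. First, the statement mixes $r'_{v,k}$ and $r_{v,k}$ in the phrase ``still executing at $r_{v,k}$''. Second, ``executing at $r'_{v,k}$'' genuinely depends on preemptions among the higher-priority tasks themselves, whereas the formula uses the simple window test $mT_j + C_j > r'_{v,k}$. I would resolve both by reading the lemma, as the definition preceding it does, as identifying candidate carry-in jobs via that window test. I would then argue that this yields a safe bound: a higher-priority job released at least $C_j$ before $r'_{v,k}$ that is still unfinished only contributes residual work, which the subsequent WCRT recurrence in step (ii) already covers through the $\lceil \cdot \rceil$ terms of Eq.~\ref{eq:wcrt}.
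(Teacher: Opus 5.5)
Your proposal is correct and follows the paper's proof essentially verbatim. It uses the same index bounds $j_{\min}=\lfloor (r'_{v,k}-C_j)/T_j\rfloor+1$ and $j_{\max}=\lceil r'_{v,k}/T_j\rceil-1$, the same count $j_{\max}-j_{\min}+1$ truncated at $0$, and then multiplies by $C_j$ (an upper bound on each carry-in job's contribution) and sums over $hp(\tau_v)$.

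One small remark: the negative-index concern never arises. Any $m\le -1$ gives $mT_j\le -T_j\le -C_j\le r'_{v,k}-C_j$, which violates the lower bound $mT_j > r'_{v,k}-C_j$. You should also drop your alternative fix of replacing $a$ by $\max(a,0)$. For $0<r'_{v,k}<C_j$ it would exclude the genuine carry-in job released at time $0$.
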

\begin{proof}
We aim to compute the cumulative interference from all higher-priority tasks $\tau_i \in hp(\tau_v)$ that have jobs released before $r'_{v,k}$ but are still executing at that time. For $j$-th carry-in job of task $\tau_i$ that is released at time $ j\cdot T_i$, we have $j \cdot T_i < r'_{v,k} \land j\cdot T_i + C_j > r'_{v,k}$. From these inequalities, we obtain bounds on the integer job index $j$. From \(j\cdot T_i + C_i > r'_{v,k}\) we have $\scriptstyle{j > \frac{r'_{v,k} - C_i}{T_i} \
\Rightarrow \ j_{\min}(i) = \Big\lfloor\frac{r'_{v,k}-C_i}{T_i}\Big\rfloor + 1}$. Similarly, from the inequality \(j\cdot T_i < r'_{v,k}\) we have $\scriptstyle{j < \frac{r'_{v,k}}{T_i} \
\Rightarrow \ j_{\max}(i) = \Big\lceil\frac{r'_{v,k}}{T_i}\Big\rceil - 1.}$
The integer indices $j$ that satisfy both these inequalities are those in the interval $[\,j_{\min}(i),\, j_{\max}(i)\,]$. Hence the number of carry-in jobs of task $\tau_i$ at time $r'_{v,k}$ is given by $j_{\max}(i) - j_{\min}(i) + 1$, which is:
$$\scriptstyle {\Big(\Big\lceil\frac{r'_{v,k}}{T_i}\Big\rceil - 1\Big)
- \Big(\Big\lfloor\frac{r'_{v,k}-C_i}{T_i}\Big\rfloor + 1\Big)
+ 1 \ \\
= \ \Big\lceil\frac{r'_{v,k}}{T_i}\Big\rceil
- \Big\lfloor\frac{r'_{v,k}-C_i}{T_i}\Big\rfloor
- 1}.$$
Since each carry-in job can contribute at most $C_i$ units of interference, we get 
$I(k,i) \ = \ \max \scriptstyle \!\left(0,\; 
\Big\lceil\frac{r'_{v,k}}{T_i}\Big\rceil
- \Big\lfloor\frac{r'_{v,k}-C_i}{T_i}\Big\rfloor
-1
\right) C_i.$
Summing over all $\tau_i \in hp(\tau_v)$, we get Eq.~\ref{Eq:interference_S1}.
\end{proof}
\noindent \textbf{(ii) Overall WCRT Analysis:}
The victim's job also experiences interference from jobs of $\tau_i \in hp(\tau_v)$ that were released at/after $r'_{v,k}$ within the response interval of the victim's job, i.e. $j \cdot T_i \ge r_{v,k} + \delta$. 
Accordingly, the WCRT of the $k$-th victim instance can be computed by solving the following fixed-point recurrence. 
\begin{align}
\label{eq:wcrt_recurrence}
R_{v,k}^{(n+1)}= C_v + I(k) +
\sum_{\tau_i \in hp(\tau_v)}
\Big\lceil\frac{R_{v,k}^{(n)}}{T_i}\Big\rceil \cdot C_i
\end{align}
We initialize $R_{v,k}^{(0)}=C_v$.
The iteration continues until convergence, i.e., when $R_{v,k}^{(n+1)} = R_{v,k}^{(n)}$.
 The effective deadline of the $k^{\text{th}}$ job, accounting for the release delay $\delta_k$, is given by $D_{v,k} = D_v - \delta_k$. For the victim task $\tau_v$, given the job level delay sequence $\Delta_v= \{\delta_1, \delta_2, \dots, \delta_{H/T_v}\}$, the schedulability condition for the victim task can be expressed as follows. 
\begin{align}
\label{eq:schedulability_victim}
    R_{v,k} \leq D_{v,k} \quad \forall k \in \{ 1, \dots, H/T_v \} 
\end{align}



Note that since the job-level delay sequence $\Delta_v$ is only applied to the jobs of $\tau_v$, it doesn't delay jobs of $hp(\tau_v)$, which can preempt jobs of $\tau_v$. However, jobs of $lp(\tau_v)$ will be affected as the delayed execution of $\tau_v$ can modify their start times, which can increase/decrease their response times. Hence, in the next section, we will perform a WCRT analysis for $\tau_i \in lp(\tau_v)$.
\subsubsection{Response Time Analysis of lower priority Non-victim Tasks}
\label{section:schedulability_nv}
To compute a conservative bound of the WCRT for lower priority tasks, i.e $\tau_i\in lp(\tau_v)$, we extend the standard response-time analysis to account for jitter~\cite{redell2002calculating} in the release of a higher-priority victim task $\tau_v$. 
For each regular (i.e. non-victim) task $\tau_i$ with priority lower than the victim task $\tau_v$, i.e $\tau_i \in lp(\tau_v)$, the interference is computed as follows.

\begin{align}
\label{eq:scheduability_non_victim}
R_i^{(k+1)}= C_i + \sum\limits_{\tau_j \in hp(\tau_i) \setminus \tau_v} \left\lceil \scriptstyle \frac{R_i^{(k)}}{T_j} \right\rceil C_j 
+ \max\!\Big(0, \left\lceil \scriptstyle \frac{R_i^{(k)} - \delta}{T_v} \right\rceil \Big) C_v
\end{align}
where $\delta = min\{ \delta_1, \delta_2, \dots, \delta_{H/T_v}\}$.
Eq.~\ref{eq:scheduability_non_victim} accounts for the interference from all other non-victim higher-priority tasks $\tau_j \in hp(\tau_i), j \neq v$ (second term) as well as the 
  the interference from the victim task $\tau_v$ with modified release times (third term). For the victim task, the response time interval is decreased in the worst case by the minimum induced delay $\delta$. 
   The recurrence is solved iteratively, similar to the standard fixed-priority Response Time Analysis (RTA) in Eq.~\ref{eq:wcrt} until convergence. 
    This formulation computes an upper bound on the response time of all lower-priority non-victim tasks, even under delayed releases of the higher-priority victim task $\tau_v$ instances. Given a job level delay sequence $\Delta_v= \{\delta_1, \delta_2, \dots, \delta_{H/T_v}\}$, of the victim task $\tau$, any task $\tau_i \in lp(\tau_v)$ is considered schedulable if, $R_i \le D_i$ where $R_i$ is computed using Eq. \ref{eq:scheduability_non_victim}. 
\subsubsection{Peak Job-level Delay and Overall System Schedulability}
\label{sec:peak-delay}
Given a job-level delay sequence $\Delta_v$ applied to job-release times $\tau_v$, to ensure overall system schedulability, the following criteria should be met. \emph{(i) }The victim task should meet its respective deadline under job-level delays as given by Eq.~\ref{eq:schedulability_victim} and \emph{(ii)} Lower priority non-victim tasks $lp(\tau_v)$ should meet their deadlines in the worst case as computed in Eq.~\ref{eq:scheduability_non_victim}. 
To determine a maximum bound of delay $\delta$ such that it preserves the schedulability of both the victim and all lower-priority tasks under a fixed-priority schedule, we define the \emph{peak job-level delay} of the victim task, denoted by $\Delta^{peak}_v$. It represents the largest delay $\delta \in \Delta_v$ such that it can be applied to all jobs of $\tau_v$ without violating the schedulability conditions of the system. Given any task set, if $\Delta_v^{peak}$ exists $\forall \ \tau_v \in \Gamma_C$, we consider the task set to be schedulable by \emph{SecureRT}.  Mathematically, we can write:
\begin{align}
\label{eq:peakdelay}
\Delta^{peak}_v &=
\max_{\delta \in [0,\, (T_v - C_v)]} 
\Big\{ \ \delta \;\Big|\;
 \forall k \in \{1,..,H/T_v \} \, R_{v,k} \leq D_{v,k}  \nonumber\\
&\hspace{6.5em} 
\bigwedge \forall\, \tau_i \in lp(\tau_v) \, R_i \leq D_i, \Big\}
\end{align}
where $R_{v,k}$ and $R_i$ represent the worst-case response times of $k$-th instance of $\tau_v$ and $\tau_i \in lp(\tau_v)$ respectively, under a job-level delay $\delta_k$. Given a single job-level delay $\delta$ applied to all the job samples of $\tau_v$, we denote its corresponding WCRT by $R_v(\delta)$ and for other lower priority non-victim tasks, it can be denoted by $R_i(\delta)$. In the next section, we discuss how to design a controller that meets our desired performance criteria under a job-level delay application.
\subsubsection{Maximum Delay Bound for Control Tasks}
\label{sec:control_job_delay}
Under a job-level delay $\delta$ applied to job instances of any victim control task $\tau_i$, the control signal actuation gets delayed by its corresponding WCRT $R_i(\delta)$, as derived in Eq.~\ref{eq:schedulability_victim}. As discussed in Sec.~\ref{sec:sysmodel}, the actuator holds the previously applied control input until the newly computed control update is actuated after $R_i(\delta)$. This degrades the closed-loop control performance. 
Therefore, this additional delay, $R_i(\delta)$, is used to model the delay-aware controller. 
To perform this, for any candidate job-level delay $\delta \in [0,\,\Delta_i^{peak}]$ (delay bound to ensure schedulability), we synthesise a corresponding \emph{delay-aware} controller with $R_i(\delta)$ following the methodology in Sec.~\ref {sec:sysmodel}.  
Specifically, we compute the delay-aware feedback gain $K_{aug}(R_i(\delta))$ and the Kalman estimator gain $L_{aug}(R_i(\delta))$ for the augmented discrete system in Eq.~\ref{Eq:final_sys} that models the actuation delay $R_i(\delta)$. These optimal controller and estimation gains are designed to optimize the cost of the delay-aware closed-loop system  $J_i(R_i(\delta))$ as computed in Eq.~\ref{eq:lqr_cost}.
\par While designing such an optimal controller, the system designer determines an upper bound, $J^{Th}$, of the control cost. This helps limit the performance degradation in the real-time platform due to platform-level uncertainties, such as delay and jitter.
We utilise this concept to determine a {\em maximum admissible delay} that preserves the control performance by limiting the control cost degradation for each control task $\tau_i \in \Gamma_C$ within a threshold $J_i^{Th}$. This maximum admissible delay must also ensure schedulability of the tasks scheduled in a uniprocessor, i.e., $\delta \in [0,\,\Delta_i^{peak}]$. 
We define the maximum admissible delay 
as follows:
\begin{definition}[Maximum Admissible Delay]
\label{def:max_delay}
The \emph{maximum admissible delay} of a control task $\tau_i$, denoted by $\Delta_i^{max} \in [0,\Delta_i^{peak}]$, 
is defined as the largest job-level delay that can be applied to all job instances of $\tau_i$ such that the nominal control cost computed using the corresponding delay-aware optimal control gain remains below a predefined performance threshold $J_i^{Th}$, 
i.e., $J_i(R_i(\delta)) \le J_i^{Th}$ and  $\Delta_i^{\text{peak}}$ satisfies Eq.~\ref{eq:peakdelay}. Mathematically, we can write:
\begin{align}
\label{eq:maxdelay}
\Delta_i^{max} = \max_{\delta \in [0,\, \Delta_i^{peak}]} \big\{\, \delta \;\big|\; J_i(R_i(\delta)) \le J_i^{Th} \,\big\}
\end{align}
\end{definition}
$\Delta_i^{max}$ represents the maximum delay that can be applied to the jobs of $\tau_i$ while ensuring both schedulability and desired control performance. Therefore, any delay $\delta_k \le \Delta_i^{max}$ can be applied to the $k$-th sample of $\tau_i$, without hampering the control performance, under the designed delay-aware controller. We demonstrate this with example~\ref{ex:example_1}.
\begin{example}
\label{ex:example_1}
We consider a task set $\tau = \{\tau_1,\tau_2,\tau_3,\tau_4\}$ scheduled with PFP, where $\tau_2$ is the victim task. The task parameters are shown in Tab.~\ref{tab:example_taskset2}. Our objective is to find the peak job-level delay $\Delta_2^{peak}$ (to ensure schedulability) and maximum admissible job-level delay $\Delta_2^{max}$ (to ensure control performance). All time units are in $ms$.
\begin{table}[H]
\centering
\begin{tabular}{|c||c|c|c|c|c|}
\hline
Task & $C_i$ & $T_i$ & $D_i$ & Vulnerability & $\Delta_i^{peak}$ \\
 & $[ms]$ & $[ms]$ & [ms] & Category & [ms]\\
\hline
\hline
$\tau_1$ & 1 & 5  & 5  & Trusted & -\\
\hline
$\tau_2$ & 3 & 10  & 10  & Trusted (Victim) & 6\\
\hline
$\tau_3$ & 3 & 20 & 20 & Untrusted & -\\
\hline
$\tau_4$ & 2 & 20 & 20 & Trusted & -\\
\hline
\end{tabular}
\caption{Example Task Set for $\Delta_i^{peak}$ Computation.}
\vspace{-2mm}
\label{tab:example_taskset2}
\end{table}
\begin{figure}[htbp]
    \centering
    \vspace{-2mm}    \includegraphics[width=0.5\textwidth,clip]{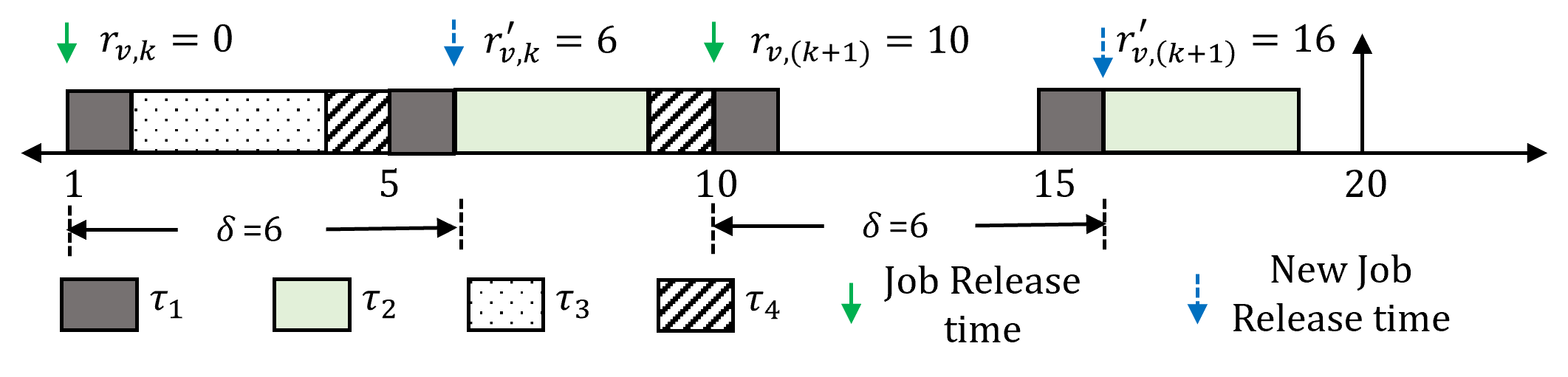}
    \caption{Task schdule with job-level delay $\delta =6$}
    \vspace{-6mm}
    \label{fig:example_schedule}
\end{figure}
If both samples of $\tau_2$ within the hyperperiod length $H=20$ are given $\delta=6$, the modified release times are $r'_{2,1}=6$ and $r'_{2,2}=16$. Using the interference computation for carry-in jobs from Eq.~\ref{Eq:interference_S1}, $I(0)=0, I(1)=0 $, indicating there are no carry-in jobs. We computed $R_{2,1}=4, R_{2,2}=4$ using Eq.~\ref{eq:wcrt_recurrence}. Since, $D_{2,1}=D_{2,2}=10-6=4$, both the jobs of $\tau_2$ meet their deadlines.
For the lower-priority tasks, $lp(\tau_2)=\{\tau_3, \tau_4$\}, using  $\delta=6$ in Eq.~\ref{eq:scheduability_non_victim}, the WCRT converges at values of $R_3(6)=4$ and $R_4(6)=10$, both meeting their respective deadlines $D_3=20$ and $D_4=20$. The higher priority tasks $hp(\tau_2)=\{\tau_1\}$ are not affected by delay. For $\delta > 6$, the jobs of $\tau_2$ miss their deadlines. Hence, all tasks remain schedulable for $\delta=\Delta_2^{peak}=6$, which represents the peak job-level delay that preserves overall system schedulability. The resulting task schedule has been illustrated in Fig.~\ref{fig:example_schedule}.
\par Using a trajectory-tracking controller (TTC) corresponding to task $\tau_2$, with a sampling time of $T_2 = 10$, we synthesised an optimal delay-aware controller as discussed in Sec.~\ref {sec:sysmodel}. We computed the control cost for 30 samples, corresponding to various job-level delays $\delta \in [0, \Delta_2^{peak}]$, using Eq.~\ref{eq:lqr_cost}. The plots of control cost are illustrated in Fig.~\ref {fig:example_cost}. The cost threshold $J_2^{Th}=139.3$ (in red dashed line) is set at $5\%$ higher than the cost at $\delta=0$ (in black dotted line). When a job-level delay $\delta=6$ has been applied, the cost $J_2(R_2(6))=142.6$ (in blue line) crosses the threshold $J_2^{Th}$ (in red dashed line), which means it is not possible to synthesise an optimal delay-aware controller. We simulated for all candidate delays $[0, 6]$ and observed that at $\delta=3\,$, $J_2(R_2(3))=138.2$, $J_2(R_2(3))<J^{Th}$. Therefore, $\Delta_2^{max}=3$, which suggests that the maximum admissible delay is $3ms$ that can be applied to the jobs of $\tau_2$, while ensuring both schedulability and desired control performance.
\end{example}
\subsection{Formulation of the Optimization Problem}
\label{sec:prob_formulation}  
As established in the earlier sections, we determine the maximum admissible job-level delay $\Delta_i^{max}, \ \forall \tau_i \in \Gamma_C$, such that any delay $\delta_k \leq \Delta_i^{max}$ preserves both system schedulability and the control performance of $\tau_i$. In this section, our goal is to find an optimal job-level delay sequence $\Delta_i[N_i], \ N_i = H/T_i$. The idea is to shift the release of every job instance of $\tau_i$ (by using the optimal job-level delay sequence) that ensures \emph{temporal overlap} between the duration of AEW $\Omega_i$ of the victim control task $\tau_i$ and execution duration of job instances of $\tau_j \in \Gamma_U$ is minimized. We denote the $k$-th job instance of victim task $\tau_i \in \Gamma_C$ by $\tau_{i,k}$, and $m$-th job instance of an untrusted task $\tau_j \in \Gamma_U$ by $\tau_{j,m}$.
In Fig.~\ref{fig:overlap}, we illustrated a case where the untrusted task instance $\tau_{j,m}$ (striped box) arrives within the AEW (red shaded box) after victim task instance $\tau_{i,k}$ (dark shaded box) has finished executing.
Note that $\tau_{i,k}$ under job-level delay $\delta_k$ finishes at time $f'_{i,k}$, where $f'_{i,k} = r_{i,k} +\delta_k + R_i(\delta_k)$ is the worst case finishing time.  
\par If the release time of victim task instance $\tau_{i,k}$ is delayed by $\delta_k$, an $m$-th instance ($m\in [1,H/T_j]$) of an untrusted task $\tau_j$ executes within $[r_{j,m}, r_{j,m} + R_j(\delta_k)]$ duration, where, $R_j(\delta_k)$ is the WCRT of $\tau_{j,m}$.
\begin{figure}[h]
    \centering
    \vspace{-2mm}
    \includegraphics[width=0.40\textwidth]{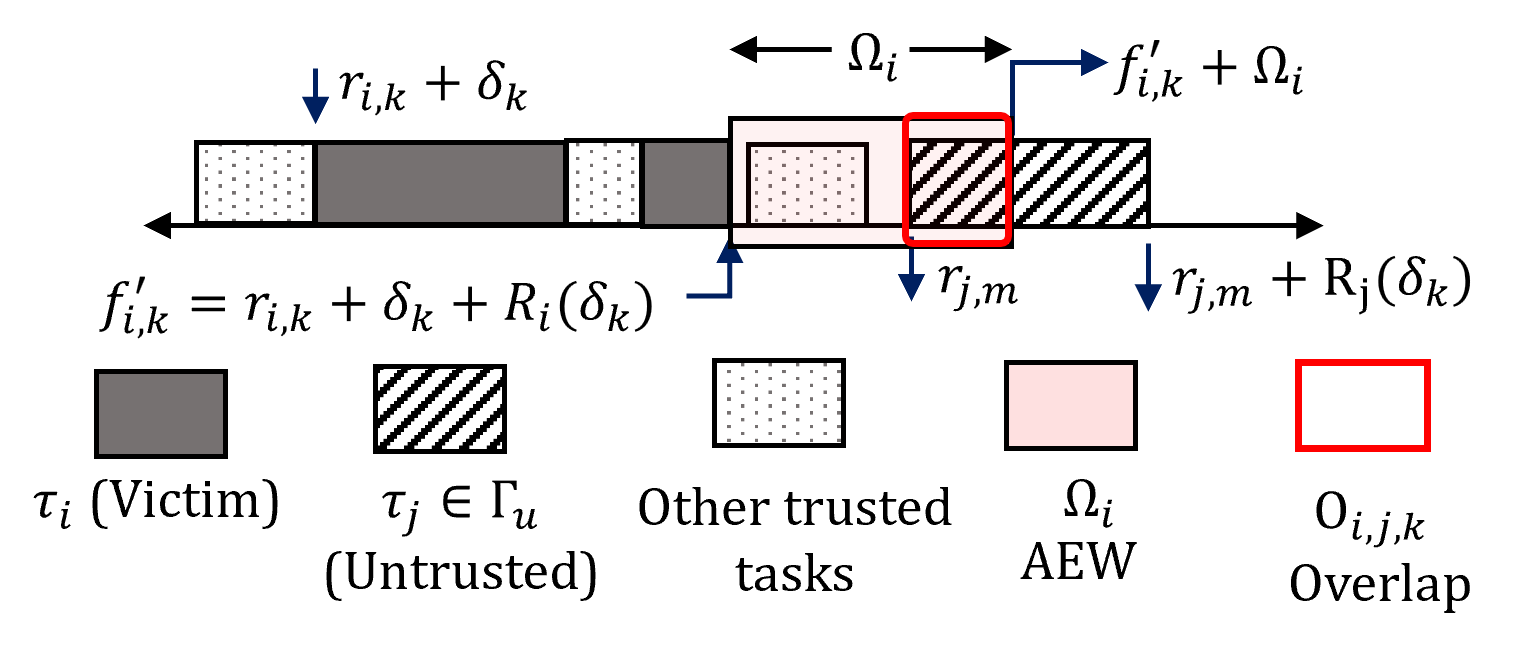}
    \caption{Overlap between victim's AEW and Untrusted Tasks }
    \label{fig:overlap}
    \vspace{-2mm}
\end{figure}
The AEW of the victim task instance $\tau_{i,k}$ is  $[f'_{i,k}, f'_{i,k} + \Omega_i]$.
The overlap $\mathcal{O}_{\langle i,k \rangle}^{\langle j,m \rangle}$ with this AEW of $\tau_{i,k}$ and untrusted task instance $\tau_{j,m}$'s execution time can be computed using the expression $\max\left(0, \min(f'_{i,k} + \Omega_i, r_{j,m} + R_j(\delta_k)) - \max(f'_{i,k}, r_{j,m})\right)$. 
Note that $\min(f_{i,k} + \Omega_i, r_{j,m} + R_j(\delta_k))$ determines the earliest finishing time between $\tau_{i,k}$'s AEW and $\tau_{j,m}$'s execution window. Also, $\max(f_{i,k}, r_{j,m})$ 
\begin{wrapfigure}[9]{l}{0.63\columnwidth}
  \centering
  \vspace{-6pt}
  \includegraphics[trim={0.1cm 0.1cm 0.2cm 0.1cm},width=\linewidth,clip]{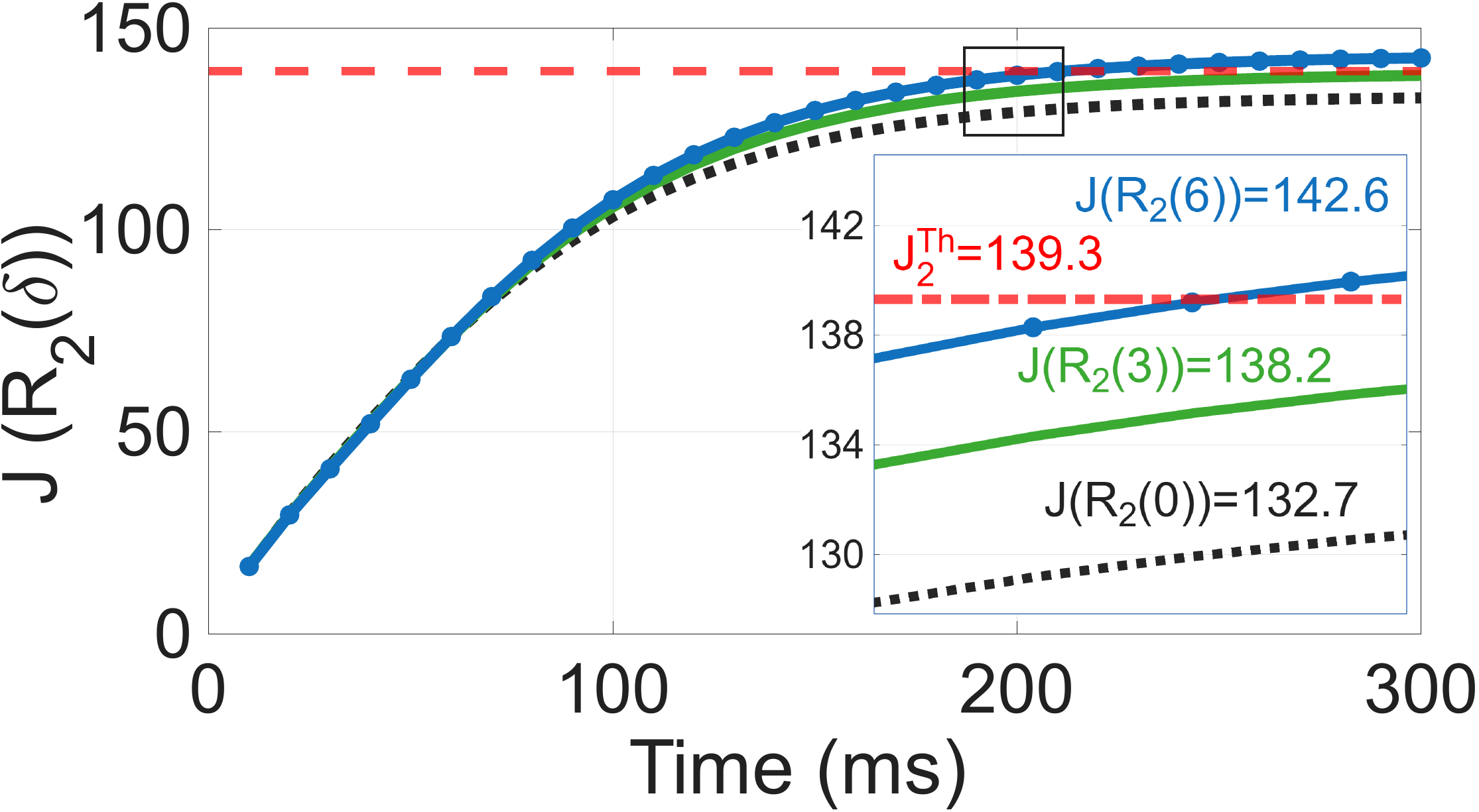}
  \caption{Determining $\Delta_i^{max}$ for $\tau_2$}
  \vspace{-2mm}
  \label{fig:example_cost}
\end{wrapfigure}
computes the latest start time between $\tau_{j,m}$ and $f'_{i,k}$. Finally, the outermost $max(0,.)$ function ensures that the computed overlap is zero when the above intervals do not intersect.
We illustrated one case of overlap in Fig.~\ref{fig:overlap}. Here, the overlap $\mathcal{O}_{\langle i,k \rangle}^{\langle j,m \rangle}$ is shown by a red-outlined box. In this case, the overlap corresponds to the condition $\min(f_{i,k} + \Omega_i, r_{j,m} + R_j(\delta_k)) = f'_{i,k} + \Omega_i$, indicating that AEW of the $\tau_{i,k}$ ends earlier than execution window $\tau_{j,m}$.
Note that the response times vary as a function of the job-level delay $\delta_k$. Therefore, as $\delta_k \in [0,\Delta^{max}_i]$, both the victim and untrusted tasks may experience WCRTs which may range from $R_j(\delta_k) \in [C_j, R_j(\Delta_i^{max})]$ and $R_i(\delta_k) \in [C_i, R_i(\Delta^{max}_i)]$. By suitably replacing the non-linear delay-dependent WCRTs in the expression for $\mathcal{O}_{\langle i,k \rangle}^{\langle j,m \rangle}$, we can derive an upper bound that computes the worst-case overlap between $\tau_{i,k}$'s AEW and $\tau_{j,m}$'s execution window. The overlap can be bounded as $\max\left(0, \min(f'_{i,k} + \Omega_i, r_{j,m} + R_j(\delta_k)) - \max(f'_{i,k}, r_{j,m})\right)\le \max \Big(
  0,\,
  \min\!\big( r_{i,k} + \delta_k + R_i(\Delta_i^{max}) + \Omega_i,\,              r_{j,m} \nonumber+ R_j(\Delta_i^{max}) \big) -\, \max\!\big( r_{i,k} + C_i + \delta_k,\,r_{j,m} \big)\Big)$. Here, the $\min(\cdot)$ term is replaced with the upper bounds of the response times $R_i(\Delta_i^{max})$ and $R_j(\Delta_i^{max})$, whereas the $\max(\cdot)$ term uses their lower bounds, which gives us a conservative (pessimistic) estimation of the maximum possible overlap. We present this pessimistic upper bound of overlap with the following notation:
\begin{align}
\label{eq:overlap_pess}
\mathcal{O}_{\langle i,k \rangle}^{\langle j,m \rangle}
  &= \max \Big(
  0,\,
  \min\!\big( r_{i,k} + \delta_k + R_i(\Delta_i^{max}) + \Omega_i,\,
              r_{j,m} \nonumber\\ 
   & + R_j(\Delta_i^{max}) \big) -\, \max\!\big( r_{i,k} + C_i + \delta_k,\,r_{j,m} \big)\Big)
\end{align}
By summing $\mathcal{O}_{\langle i,k \rangle}^{\langle j,m \rangle}$ over all job instances $m \in [1, T_j]$ of each untrusted task $\tau_j \in \Gamma_U$, and across all victim job samples $k \in [1, N_i]$, we obtain the total overlap duration $\mathcal{O}_i$ accumulated over the entire hyperperiod $H$, i.e.
\begin{align}
\label{eq:overlap_funcnonlin}
\mathcal{O}_{i} 
  &= \sum_{k=1}^{N_i} \sum_{\tau_j \in \Gamma_U} \sum_{m=1}^{H/T_j} \mathcal{O}_{\langle i,k \rangle}^{\langle j,m \rangle}.
\end{align}
A larger overlap value implies that a greater number of untrusted task instances' execution intervals coincide with the victim’s AEW, increasing the exposure of the control task to potential FDI attempts via posterior SBAs. Therefore, our objective is to compute an optimal job-level delay sequence $\Delta_i[N_i] = \{d_1, \dots, d_{N_i}\}$ that minimizes this overlap while preserving the schedulability of all tasks.
\\
$\bullet$ \textbf{Linearizing the Overlap Function: }Note that each term $\mathcal{O}_{\langle i,k \rangle}^{\langle j,m \rangle}$ in Eq.~\ref{eq:overlap_pess} includes a nested $\max(\cdot)$ and $\min(\cdot)$ operation, hence every summand term of $\mathcal{O}_i$ from Eq.~\ref{eq:overlap_funcnonlin} is a \emph{piecewise-linear} and \emph{non-convex} function of the $\delta_k \ \forall k \in [1,N_i]$. Therefore, the function in Eq.~\ref{eq:overlap_funcnonlin} can not be handled by standard MILP solvers~\cite{boyd2004convex}. So, we linearize it by introducing \emph{auxiliary} and \emph{binary decision variables} that transforms such non-linear formulation into an equivalent set of linear constraints in the form of $Ax\leq b$, which are suitable for solving MILP optimization. 
Let $M>0$ denote a sufficiently large constant (commonly referred to as the \emph{Big-M}) that upper bounds all relevant job release/finish times within the system. 
In practice, we choose $M \ge H + \max_{\tau_j \in \Gamma_U} R_j(\Delta_i^{max}) + R_i(\Delta_i^{max}) + \Omega_i$, which safely bounds all temporal variables appearing in the overlap computation. This ensures that all linearized inequalities remain valid for any feasible scheduling scenario. We define three continuous auxiliary variables $a_{k,j,m},\ b_{k,j,m},\ z_{k,j,m} \in [0,M)$ and three binary decision variables $y^a_{k,j,m},\, y^b_{k,j,m},\, o_{k,j,m} \in \{0,1\}$. The continuous variables indicate intermediate quantities from Eq.~\ref{eq:overlap_pess}. Specifically, $a_{k,j,m} = \max\!\big( r_{i,k} + C_i + \delta_k,\,r_{j,m} \big)$, $b_{k,j,m} = \min ( r_{i,k} + \delta_k + R_i(\Delta_i^{max}) + \Omega_i, r_{j,m} + R_j(\Delta_i^{max}))$ and $z_{k,j,m} = max(0,\, b_{k,j,m} - a_{k,j,m})$. 
The binary decision variables are used to indicate the active region of the piecewise-linear function. The variable $y^a_{k,j,m}$ determines whether $a_{k,j,m}$ equals $r_{i,k} + C_i + \delta_k$ or $r_{j,m}$, $y^b_{k,j,m}$ determines whether $b_{k,j,m}$ equals $r_{i,k} + \delta_k + R_i(\Delta_i^{max}) + \Omega_i$ or $r_{j,m} + R_j(\Delta_i^{max}))$ and $o_{k,j,m}$ indicates whether a positive overlap (\(z_{k,j,m}>0\)) exists. 
\par For the expression $a_{k,j,m} = \max\!\big( r_{i,k} + C_i + \delta_k,\,r_{j,m} \big)$, following are the linear constraints: 
\begin{align}
    a_{k,j,m} &\geq r_{i,k} + C_i + \delta_k \label{eq:a_ge_f}\\
    a_{k,j,m} &\geq r_{j,m} \label{eq:a_ge_r}\\
    a_{k,j,m} &\leq r_{i,k} + C_i + \delta_k + M(1 - y^a_{k,j,m}), \label{eq:a_le_f}\\ 
    a_{k,j,m} &\leq r_{j,m} + M\,y^a_{k,j,m} \label{eq:a_le_r}
\end{align}
If $y^a_{k,j,m}=1$, Eq.~\eqref{eq:a_ge_f}-~\eqref{eq:a_le_f} gives $a_{k,j,m}= r_{i,k} + C_i + \delta_k$ and if $y^a_{k,j,m}=0$, Eq.~\eqref{eq:a_ge_r}–~\eqref{eq:a_le_r} gives $a_{k,j,m}= r_{j,m}$. 
%
Similarly, for the expression $b_{k,j,m} = \min ( r_{i,k} + \delta_k + R_i(\Delta_i^{\max}) + \Omega_i, r_{j,m} + R_j(\Delta_i^{\max}))$, we introduce the linearized constraints below:
\begin{align}
    b_{k,j,m} &\leq r_{i,k} + \delta_k + R_i(\Delta_i^{\max}) + \Omega_i, \label{eq:b_le_f}\\
    b_{k,j,m} &\leq r_{j,m} + R_j(\Delta_i^{\max}), \label{eq:b_le_r}\\
    b_{k,j,m} &\geq r_{i,k} + \delta_k + R_i\scriptstyle{(\Delta_i^{\max})} + \Omega_i - M(1 - y^b_{k,j,m}), \label{eq:b_ge_f}\\
    b_{k,j,m} &\geq r_{j,m} + R_j\scriptstyle{(\Delta_i^{\max})} -  M\,\scriptstyle{y^b_{k,j,m}}. \label{eq:b_ge_r}
\end{align}
Finally, for the outermost $max(.)$ function \(z_{k,j,m} = \max(0,\, b_{k,j,m} - a_{k,j,m})\) the linearized constraints are:
\begin{align}
    z_{k,j,m} &\ge 0, \label{eq:z_nonneg}\\
    z_{k,j,m} &\ge b_{k,j,m} - a_{k,j,m}, \label{eq:z_ge_diff}\\
    z_{k,j,m} &\le b_{k,j,m} - a_{k,j,m} + M(1 - o_{k,j,m}), \label{eq:z_le_diff}\\
    z_{k,j,m} &\le M\,o_{k,j,m}. \label{eq:z_le_bigM}
\end{align}
Therefore, the complete MILP formulation for optimal job-level delay synthesis can be specified as:
\begin{gather}
\min_{\substack{
a_{k,j,m},\, b_{k,j,m},\, z_{k,j,m},\\
y^a_{k,j,m},\, y^b_{k,j,m},\, o_{k,j,m},\,\delta_k
}}
\mathcal{O}_i = \sum_{k=1}^{H/T_i} 
\sum_{\tau_j \in \Gamma_U} 
\sum_{m=1}^{H/T_j} 
z_{k,j,m}, \label{eq:objective_func_lin}\notag
\end{gather}
\begin{align}
\text{subject to:} \quad &(1) \quad 0 \le \delta_k \le \Delta_i^{\max}, \forall k \in [1,N_i] \tag{C1} \label{constr:C1} \\
&(2) \quad \text{Eqs.}~(\ref{eq:a_ge_f})\text{--}(\ref{eq:z_le_bigM}). \tag{C2} \label{constr:C2}
\end{align}
Solving the MILP in Eq.~\ref{eq:objective_func_lin} provides a concrete set of optimal job-level delay $\Delta_i[N]= \{\delta_1,\delta_2...\delta_{H/T_i}\}$ values for the control task $\tau_i$. 
Eq.~\eqref{eq:a_ge_f}--\eqref{eq:z_le_bigM} collectively introduce 
$\left[6 \left( \tfrac{H}{T_i}\lvert \Gamma_U \rvert  \right) + \tfrac{H}{T_i}\right]$
decision variables, including both continuous and binary variables. 
The linearization constraints in~\ref{constr:C2} contribute 
$12\!\left( \tfrac{H}{T_i} \lvert \Gamma_U \rvert\right)$ linear constraints and from schedulability and performance constraints in~\ref{constr:C1}, we get $2(H/T_i)$ linear constraints. 

\subsection{Runtime Algorithm for SecureRT Framework}
\label{runtime_algorithm}
\begin{algorithm}[!b]
\footnotesize
\caption{\emph{PFP-d} Scheduler}
\label{alg:pfp_d}
\begin{algorithmic}[1]
\Require Task set $\mathcal{T}=(\Gamma_C\cup\Gamma_{NC}, \Gamma_U \subseteq \Gamma_{NC})$, job-level delays $\Delta_{i_{\tau_i\in\Gamma_C}}$, detector flag \textit{AtkFlag} (returns $v$ for victim $\tau_v$ or $-1$), thresholds $Th_i$, time step $\Delta t$, reset time $T_{RES}$
\Ensure Execution sequence $S$
\State $Mode \gets$ PFP, $t \gets 0$, $AtkFlag=-1$, $JobIdx \gets 0$ \label{init}
\State $\Delta_v \gets \text{NULL}$; $ReadyQueue, DeferQueue, S \gets \emptyset$ \label{init_arrays} 
\While{$t \le T_{RES}$} \label{while_loop}
  \State $(g[k],\, i) \gets \textsc{UpdateDetector}(t)$  \Comment{Check detector output}  \label{update_detector}
  \If{$(g_i[k] > Th_i)$ \textbf{and} $(i \neq -1)$}
     \State $AtkFlag \gets i$; $v \gets i$;
     \State $\Delta_v \gets \Delta_i$; \Comment{Fetch victim’s delay pattern}
     \State $JobIdx \gets k \bmod |\Delta_v|$; $Mode \gets$ PFP-d \label{switch_mode_kth}
  \EndIf
  \For{each $\tau_i \in \mathcal{T}$ arriving at time $t$} \label{for_arrival}
    \If{$(\tau_i = \tau_v)$ \textbf{and} $(Mode = \text{PFP-d})$}
        \State Defer $\tau_i$ to $t + \Delta_v[JobIdx] \cdot \Delta t$; \Comment{Apply delay} \label{defer_victim}
        \State $JobIdx \gets (JobIdx + 1)\%|\Delta_v|$; \label{update_index} \Comment{Update index}
    \Else
        \State Add $\tau_i$ to $ReadyQueue$ \label{ready_queue}
    \EndIf
  \EndFor
  \If{tasks deferred at time $t$}
     \State Move them to $ReadyQueue$ \label{defer_to_ready}
  \EndIf
  \If{$ReadyQueue \neq \emptyset$} \Comment{Dispatch next job}
     \State Schedule highest-priority job and append to $S$ \label{schedule_job} 
  \EndIf
  \State $t \gets t + \Delta t$ \label{alg:update_time}
  \If{$t \bmod H = 0$} \Comment{Reset delay pattern at hyperperiod}
     \State $JobIdx \gets 0$ \label{alg:resetjob}
  \EndIf
\EndWhile
\State \Return $S$ \label{return_seq}
\end{algorithmic}
\end{algorithm}
Following the methodology in Sec.~\ref{sec:prob_formulation}, we computed optimal job-delay $\Delta_i[N] \ \forall \tau_i\in \Gamma_U$.
This section details Alg.~\ref{alg:pfp_d} of the \emph{SecureRT} framework, which detects SBA attempts on the control task and switches to the \emph{PFP-d} scheduler.
Once an anomaly is detected, CPSs typically initiate a temporary mitigation phase that uses lightweight defence mechanisms to preserve the operational continuity for a certain duration~\cite{giraldo2018survey}. After this interval, a \emph{system-level reset} is typically done to restore nominal operation and enable more resource-intensive protection mechanisms~\cite{banerjee2022secure,lu2024recovery}. We refer to this parameter as $T_{RES}$, which represents the \emph{system reset time}, a predefined duration during which the runtime mitigation remains active following the detection of an SBA.
\\
$\bullet$ \textbf{\textit{Runtime Operation:} }The algorithm takes the following inputs: \textbf{(i)} task set $\mathcal{T}=(\Gamma_C \cup \Gamma_{NC}), \ \Gamma_U \subseteq \Gamma_{NC}$ with specifications, \textbf{(ii)} optimal job-level delay arrays $\Delta_i[H/T_i], \, \forall \tau_i \in \Gamma_C$, and \textbf{(iii)} detector thresholds $Th_i$, \textbf{(iv)} unit time step $\Delta t$ and system reset time $T_{RES}$. 
At t = 0, the algorithm initializes the scheduler mode to PFP. The current victim job-level delay sequence is initialized as $\Delta_v = NULL$, i.e. all elements of the delay array are initially set to $0$. The task queues \textit{ReadyQueue}, \textit{DeferQueue}, and \textit{S} are initialized as empty (line~\ref{init_arrays}). The \emph{ReadyQueue} stores jobs ready for execution, \emph{DeferQueue} temporarily holds the deferred jobs of the victim task, and $S$ dispatches the highest priority job from the \textit{ReadyQueue}. 
\par During simulation, (line~\ref{while_loop}), the \textsc{UpdateDetector()} function fetches the $\chi^2$ statistic $g_i[k]$ from all residue-based detectors associated with the control tasks (line~\ref{update_detector}). If all $\chi^2$ statistics remain below their respective thresholds, i.e. $g_k[i]\leq Th_i, \, \forall \tau_i \in \Gamma_C$, the system continues job scheduling with the PFP scheduler (as all job-level delays are zero). However, when any $g_i[k]$ exceeds its threshold ($g_i[k] > Th_i$), an attack on task $\tau_i$ is detected at the $k$-th sample. The $AtkFlag$ then returns the index of the compromised control task $\tau_i$, marking it as the victim task $\tau_v$ (see line~\ref{switch_mode_kth}).

\par After detection of an SBA on the victim control task $\tau_i$, the precomputed job-level delay sequence $\Delta_i$ is fetched from memory. 
Each task $\tau_i \in \Gamma_C$ has its own optimal job-level delay sequence $\Delta_i = [\delta_{i,1}, \delta_{i,2}, \dots, \delta_{i, H/T_i}]$, which is stored offline. The scheduler mode is switched from PFP to \emph{PFP-d} (see line~\ref{switch_mode_kth}). Upon switching, the algorithm sets the current job-level delay index to $\textit{JobIdx} = k \bmod |\Delta_i|$, corresponding to the sampling instant the attack was detected. After this, the current active job of $\tau_v$ is deferred by its corresponding job-level delay $\Delta_v[JobIdx]$ (see line~\ref{defer_victim}). The deferral operation delays the release of the active job, ensuring it follows the correct cyclic job-level delay value from the sequence $\Delta_v$.
The deferred job is placed in the \textit{DeferQueue}, which holds the pending jobs of $\tau_v$ with modified release time. Subsequent jobs of $\tau_v$ follow the remaining delays in $\Delta_v$ cyclically across hyperperiods. All other arriving tasks are directly added to the \textit{ReadyQueue}. When the system clock reaches the deferred release time, the scheduler automatically transfers the task from the \textit{DeferQueue} to the \textit{ReadyQueue} (see lines~\ref{ready_queue}–\ref{defer_to_ready}).
\par At each unit time step, the highest priority job in the \textit{ReadyQueue} is appended to $S$. (line~\ref{schedule_job}). The system clock is incremented by $\Delta t$ after each iteration, and the job-level delay index is reset at the start of every new hyperperiod. (see lines~\ref{alg:update_time} to ~\ref{alg:resetjob}). Using Alg.~\ref{alg:pfp_d}, \emph{SecureRT} updates the job release time of the victim task in real-time.
Finally, all jobs are dispatched from $S$ until the system reset time $T_{RES}$, after which a system-level reset is triggered (line~\ref{return_seq}).
\\
$\bullet$ \noindent \textbf{\textit{Correctness and Complexity:} } Since all job-level delays $\Delta_i[N_i]$ are optimally chosen from $(0,\Delta_i^{max}]$, naturally they ensure minimum overlap and schedulability.
All operations in Alg.~\ref{alg:pfp_d}, such as job deferral and ready queue updates, are performed in constant time; therefore, the complexity of the proposed \emph{PFP-d} algorithm is $O(1)$. Since all optimal job-level delay sequences, i.e. $\Delta_i[N_i], \forall \tau_i \in \Gamma_C$ are stored offline, there is negligible runtime overhead, making it suitable for online deployment in resource-constrained real-time embedded platforms.
\section{Experimental Evaluation}
\begin{figure*}[!htbp]
    \centering
    \begin{subfigure}[b]{0.23\textwidth}
        \centering
    \includegraphics[trim=0cm 4cm 0cm 0cm, width=\textwidth,clip]{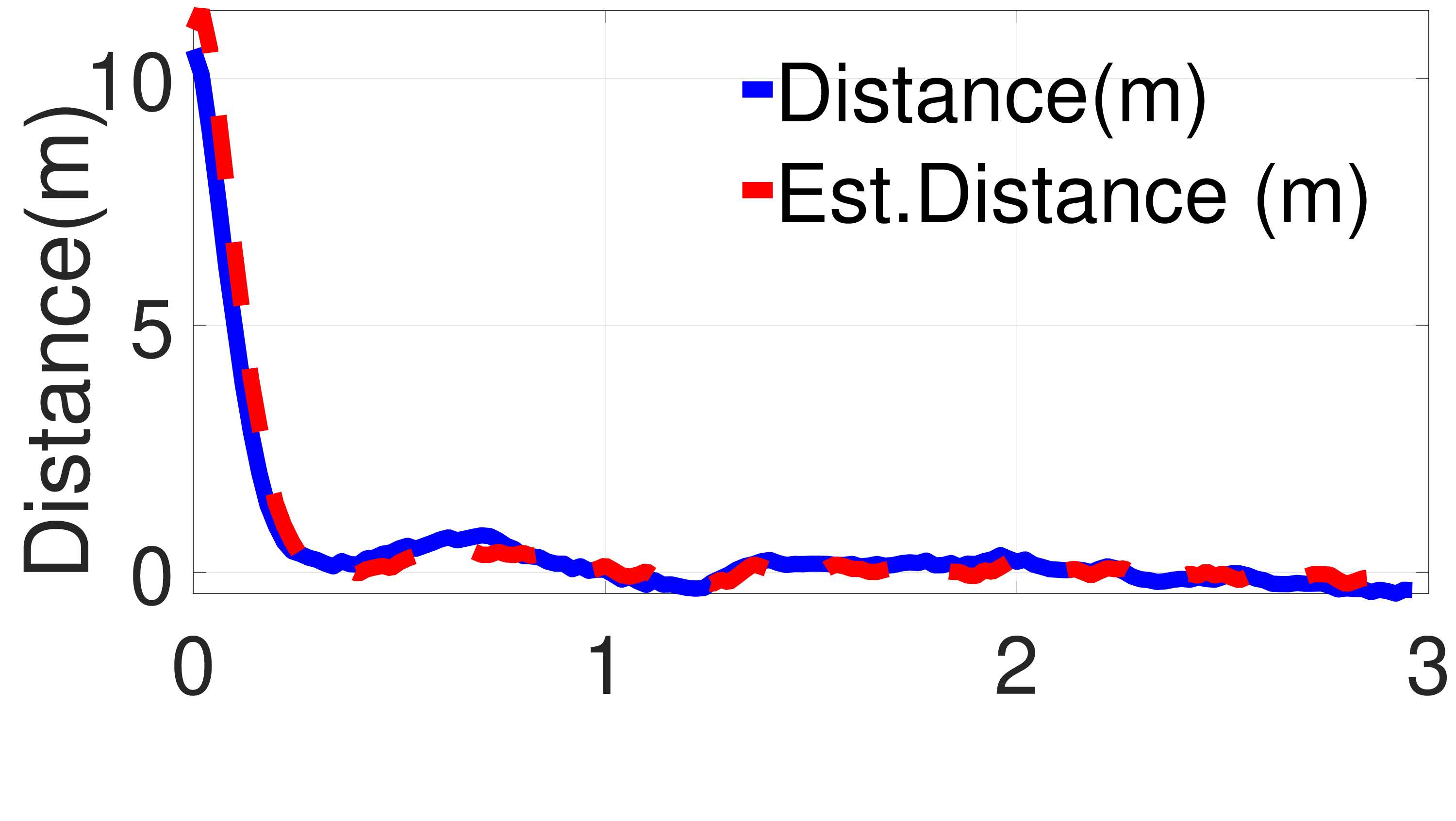}
        \caption{Case \textbf{\emph{(i)}} 
        Plant State}
        \label{fig:1_state}
    \end{subfigure}
    \hfill
    \begin{subfigure}[b]{0.23\textwidth}
        \centering
     \includegraphics[trim=0cm 4cm 0cm 0cm,width=\textwidth,clip]{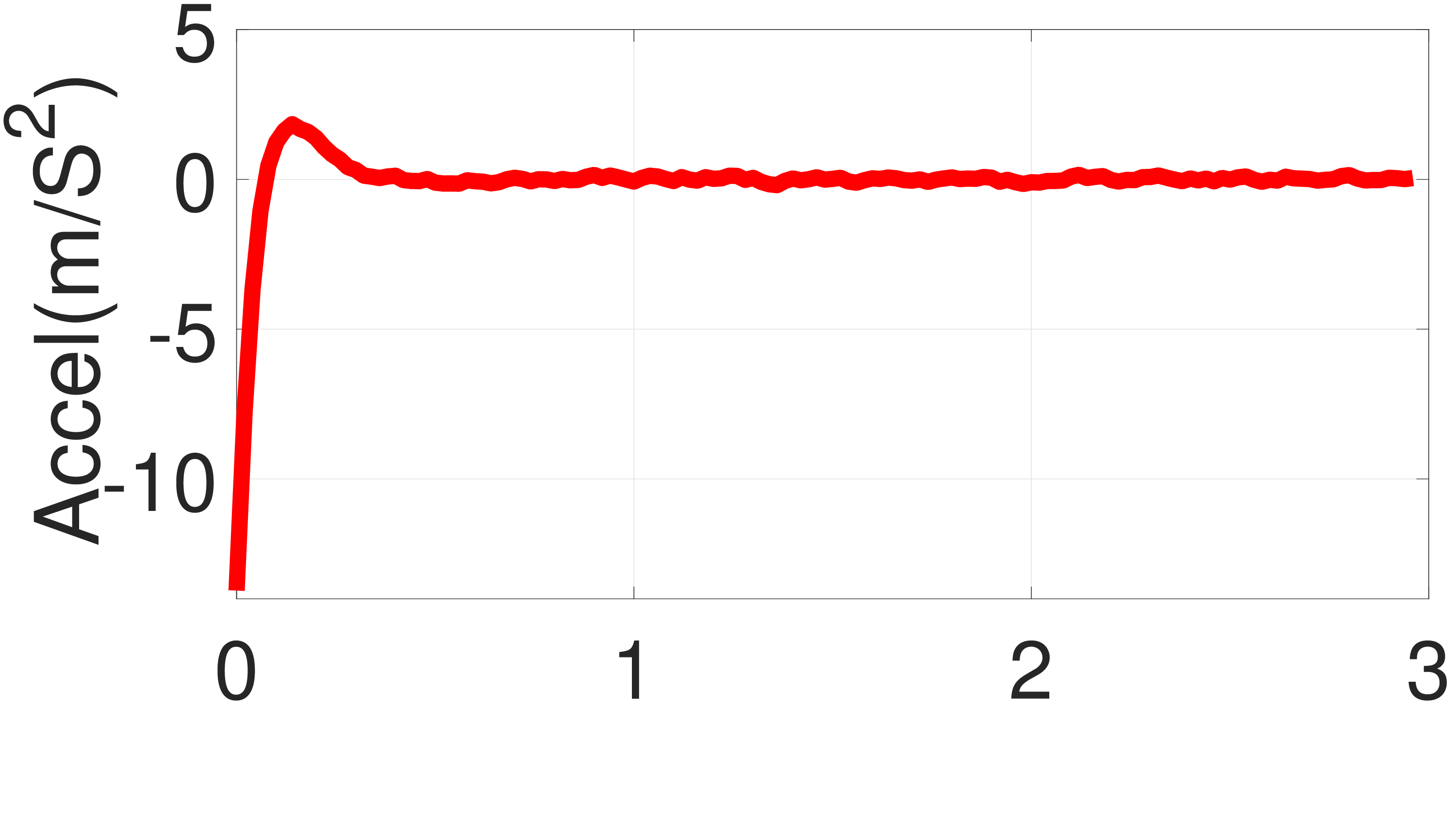}
        \caption{Case \textbf{\emph{(i)}} Control Input}
        \label{fig:1_control}
    \end{subfigure}
    \hfill
    \begin{subfigure}[b]{0.23\textwidth}
        \centering
    \includegraphics[trim=0cm 4cm 0cm 0cm,width=\textwidth,clip]{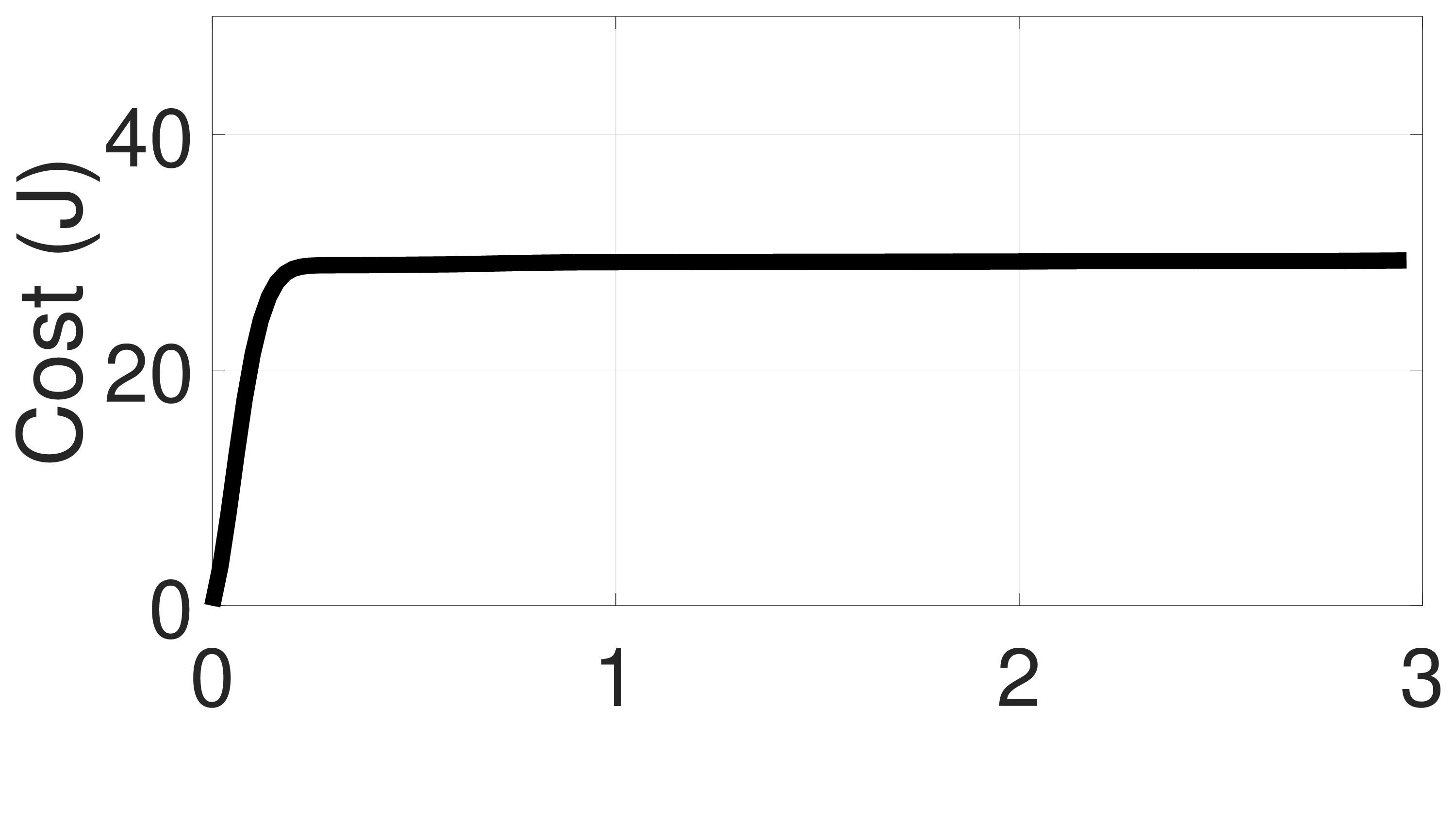}
        \caption{Case \textbf{\emph{(i)}} Control Cost}
        \label{fig:1_cost}
    \end{subfigure}
\hfill
     \begin{subfigure}[b]{0.23\textwidth}
        \centering
      \includegraphics[trim=0cm 4cm 0cm 0cm,width=\textwidth,clip]{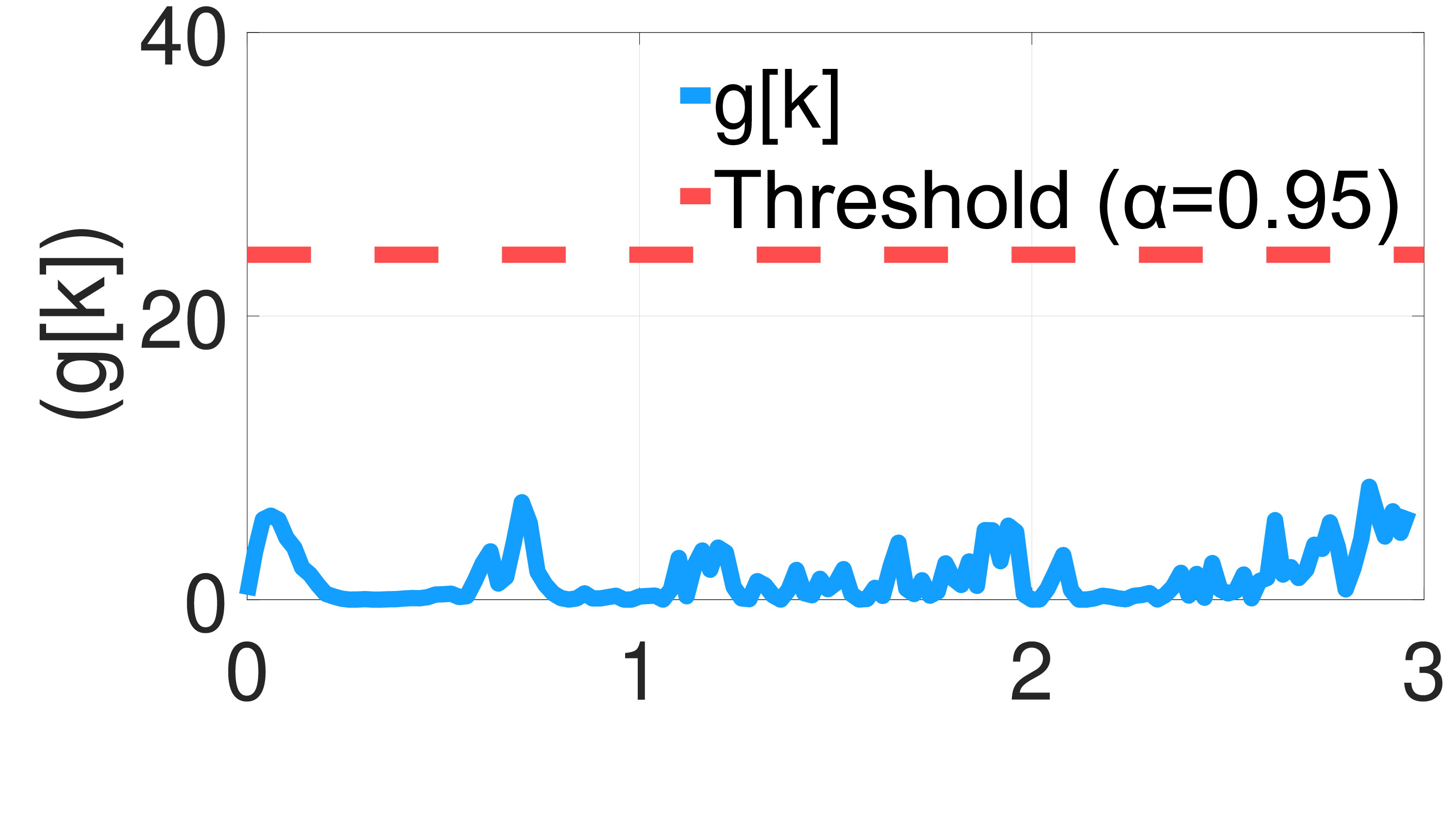}
        \caption{Case \textbf{\emph{(i)}} Residue Statistics}
        \label{fig:1_res}
    \end{subfigure}
    \begin{subfigure}[b]{0.23\textwidth}
        \centering
     \includegraphics[trim=0cm 4cm 0cm 0cm,width=\textwidth,clip]{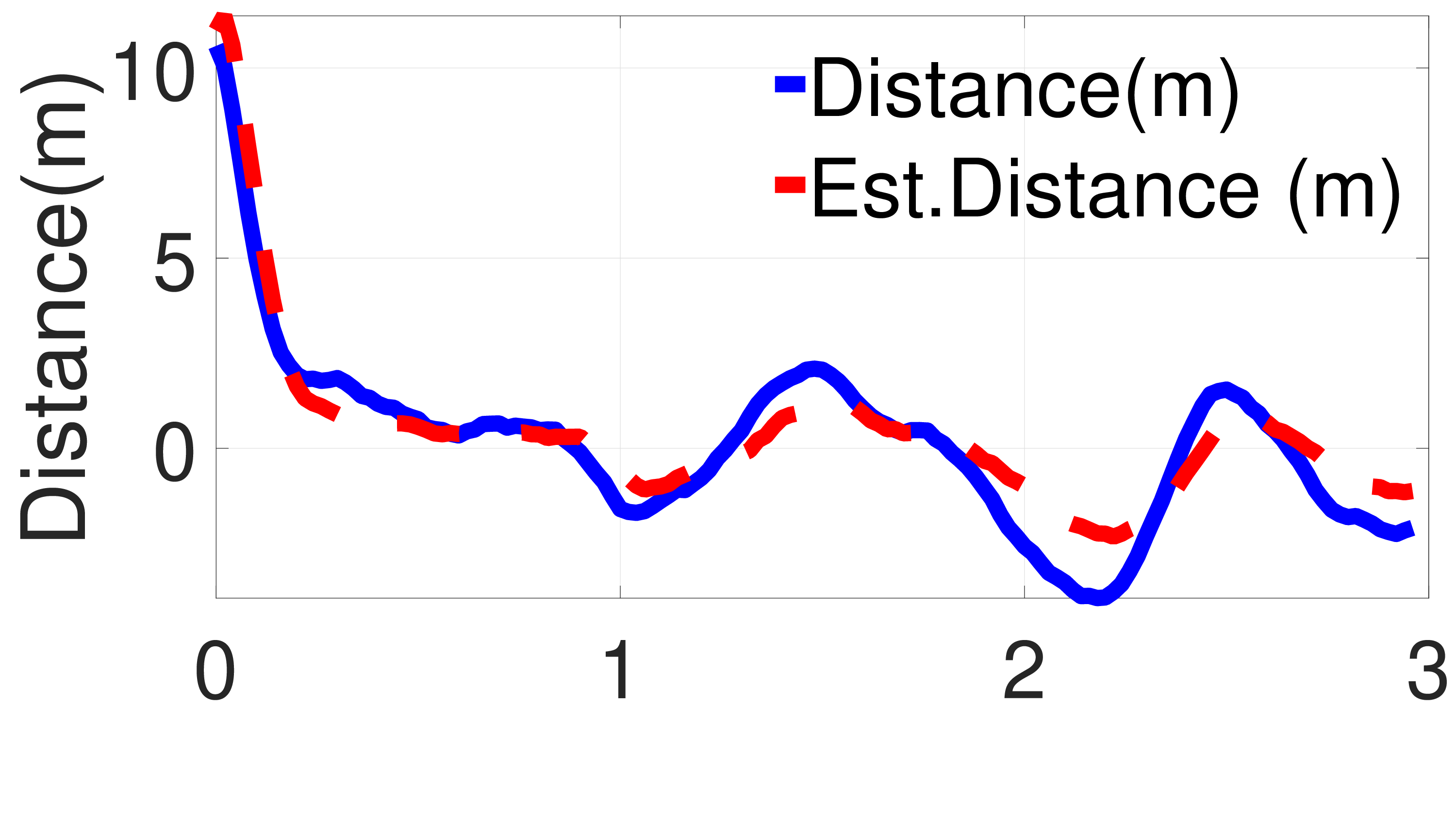}
        \caption{Case \textbf{\emph{(ii)}} Plant State}
         \label{fig:2_state}
    \end{subfigure}
    \hfill
    \begin{subfigure}[b]{0.23\textwidth}
        \centering
        \includegraphics[trim=0cm 4cm 0cm 0cm,width=\textwidth]{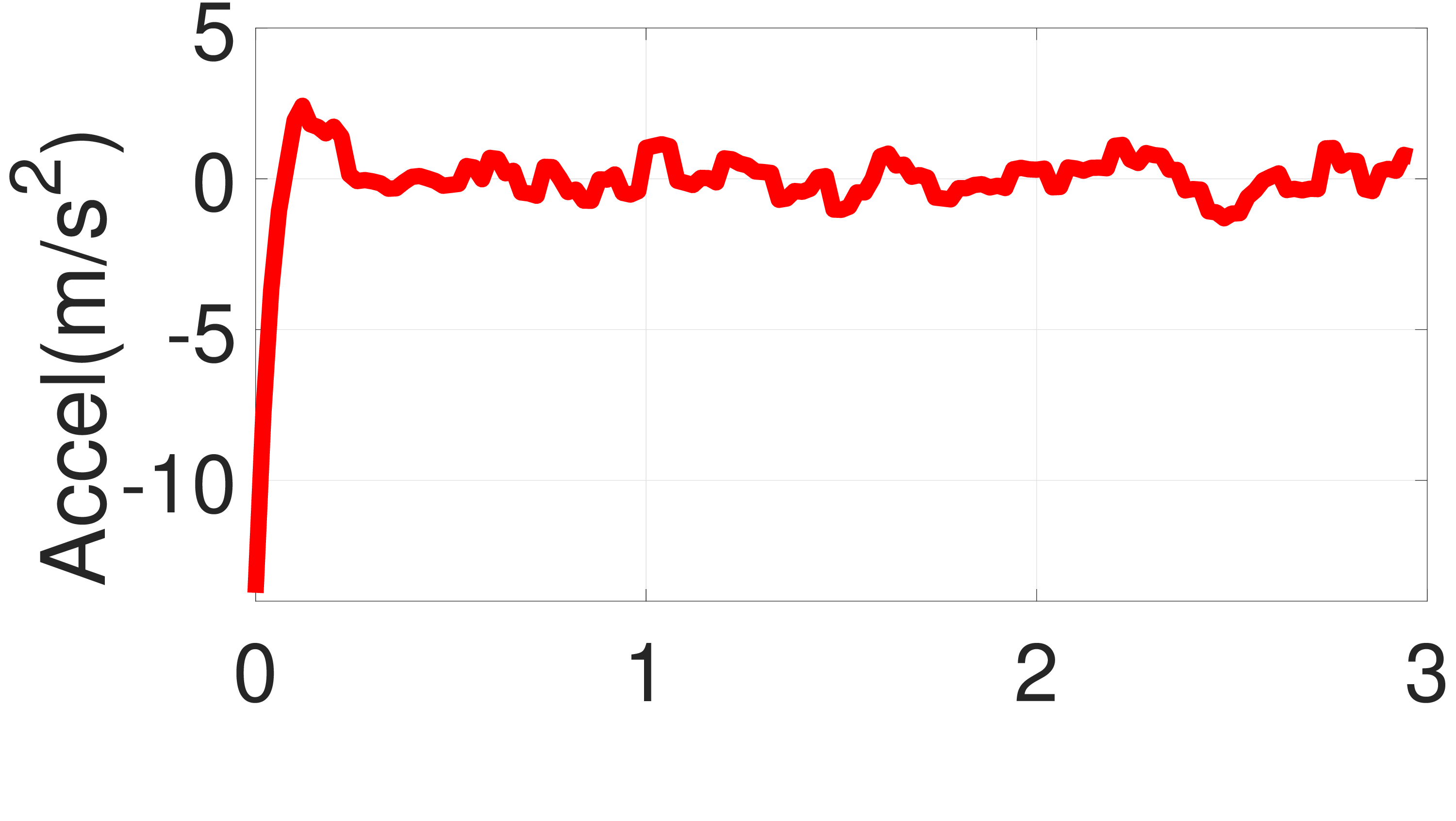}
        \caption{Case \textbf{\emph{(ii)}} Control Input}
         \label{fig:2_control}
    \end{subfigure}
    \hfill
    \begin{subfigure}[b]{0.23\textwidth}
        \centering
    \includegraphics[trim={0cm 4cm 0cm 0cm},width=\textwidth,clip]{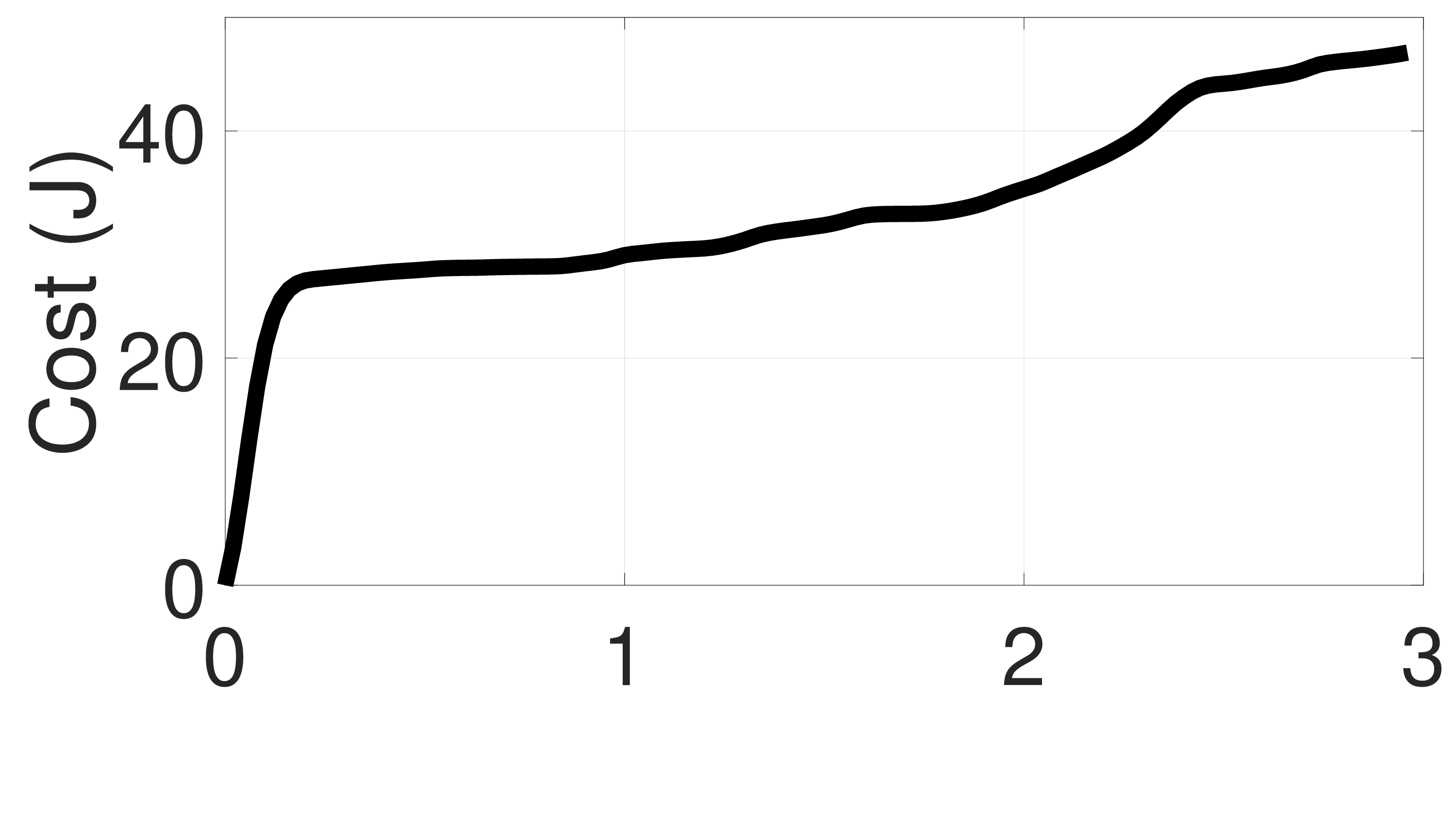}
        \caption{Case \textbf{\emph{(ii)}} Control Cost}
        \label{fig:2_cost}
    \end{subfigure}
     \hfill
    \begin{subfigure}[b]{0.23\textwidth}
        \centering
        \includegraphics[trim=0cm 4cm 0cm 0cm,width=\textwidth,clip]{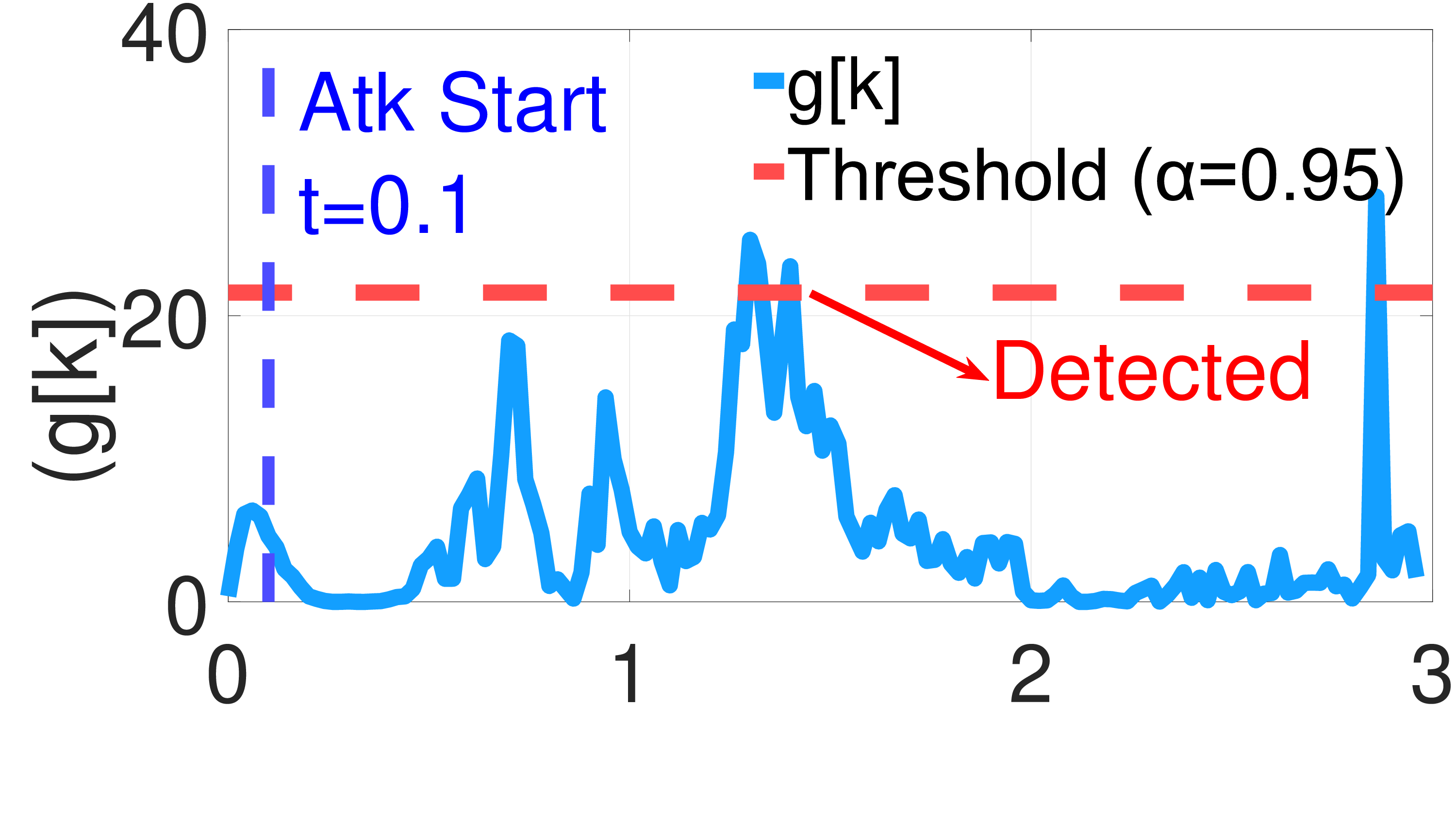}
        \caption{Case \textbf{\emph{(ii)}} Residue Statistics}
        \label{fig:2_res}
    \end{subfigure}
    \begin{subfigure}[b]{0.23\textwidth}
        \centering
        \includegraphics[trim={0cm 4cm 0cm 0cm},width=\textwidth,clip]{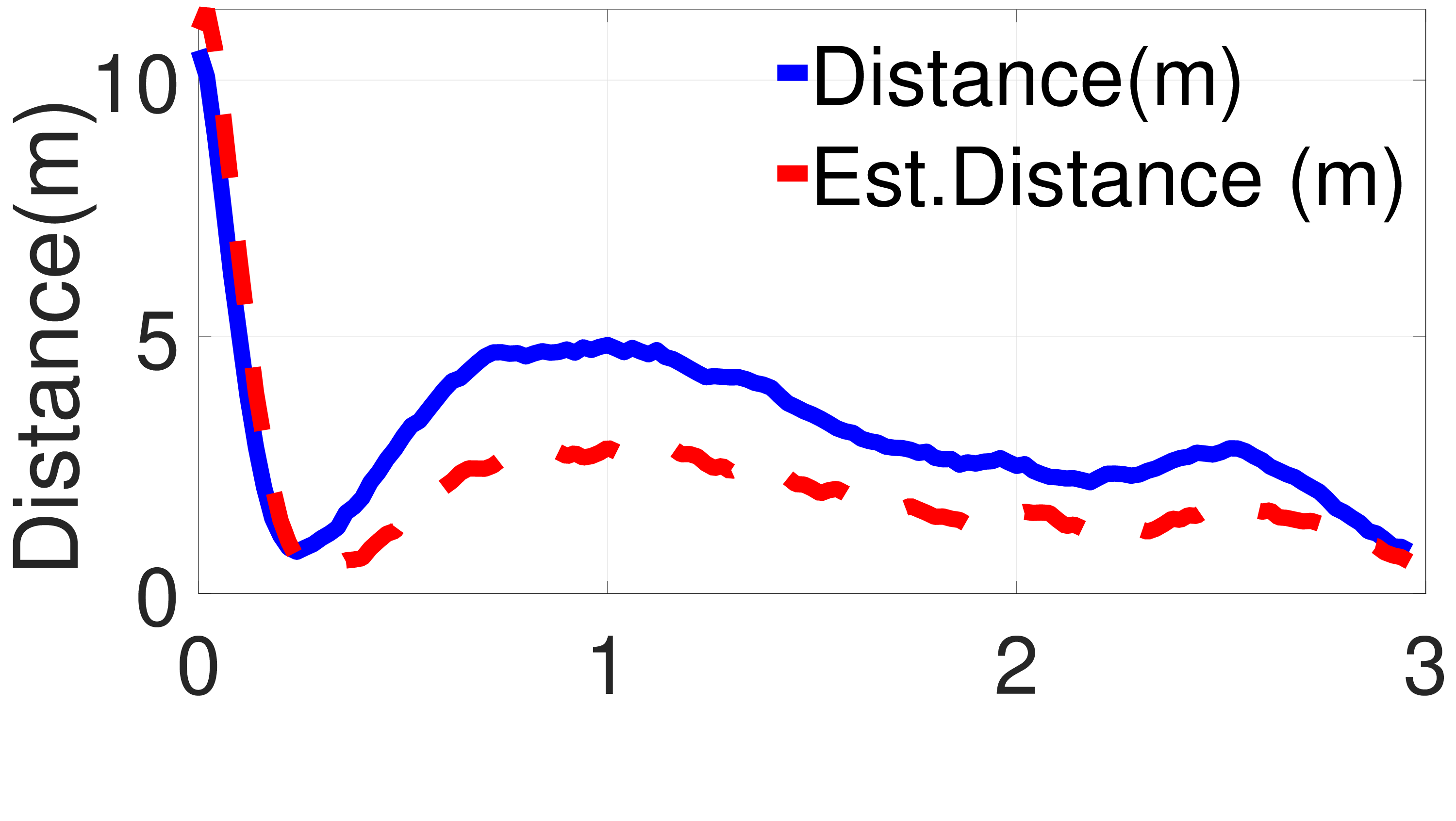}
        \caption{Case \textbf{\emph{(iii)}} Plant State}
         \label{fig:3_state}
    \end{subfigure}
    \hfill
    \begin{subfigure}[b]{0.23\textwidth}
        \centering
        \includegraphics[trim={0cm 4cm 0cm 0cm},width=\textwidth,clip]{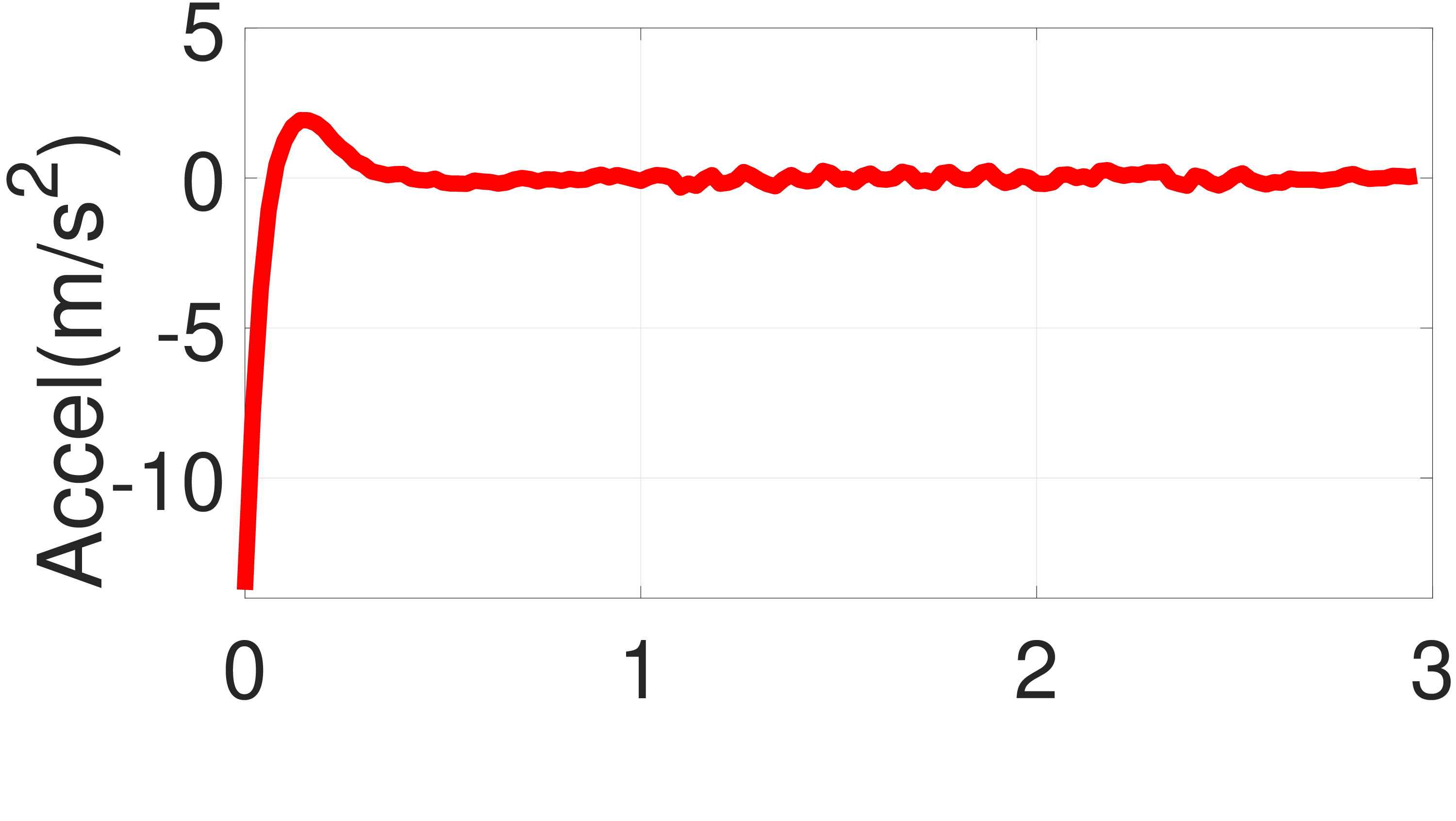}
        \caption{Case \textbf{\emph{(iii)}} Control Input}
         \label{fig:3_control}
    \end{subfigure}
    \hfill
    \begin{subfigure}[b]{0.23\textwidth}
        \centering
        \includegraphics[trim={0cm 4cm 0cm 0cm},width=\textwidth,clip]{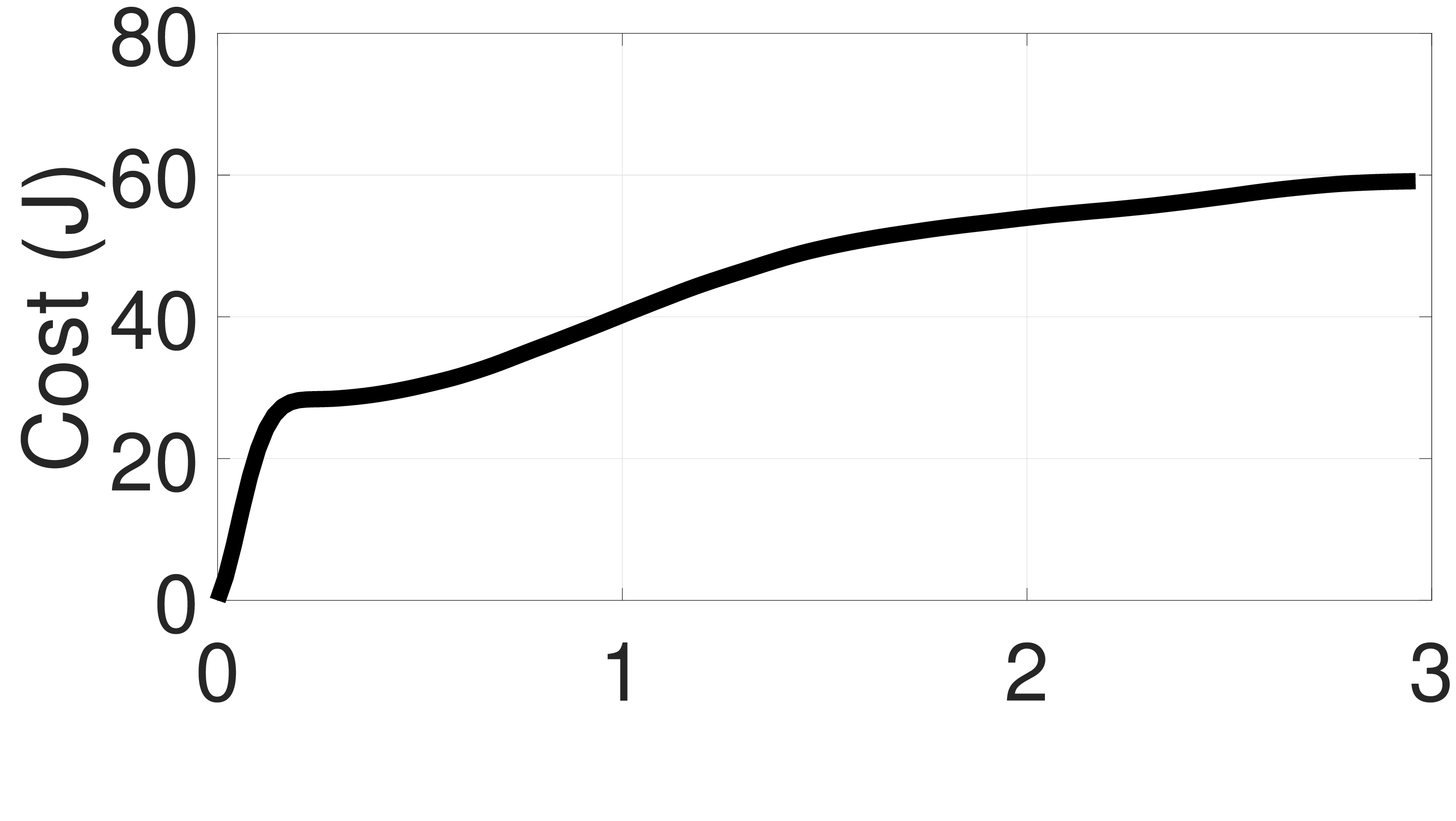}
        \caption{Case \textbf{\emph{(iii)}} Control Cost}
        \label{fig:3_cost}
    \end{subfigure}
     \hfill
    \begin{subfigure}[b]{0.23\textwidth}
        \centering
        \includegraphics[trim={0cm 4cm 0cm 0cm},width=\textwidth,clip]{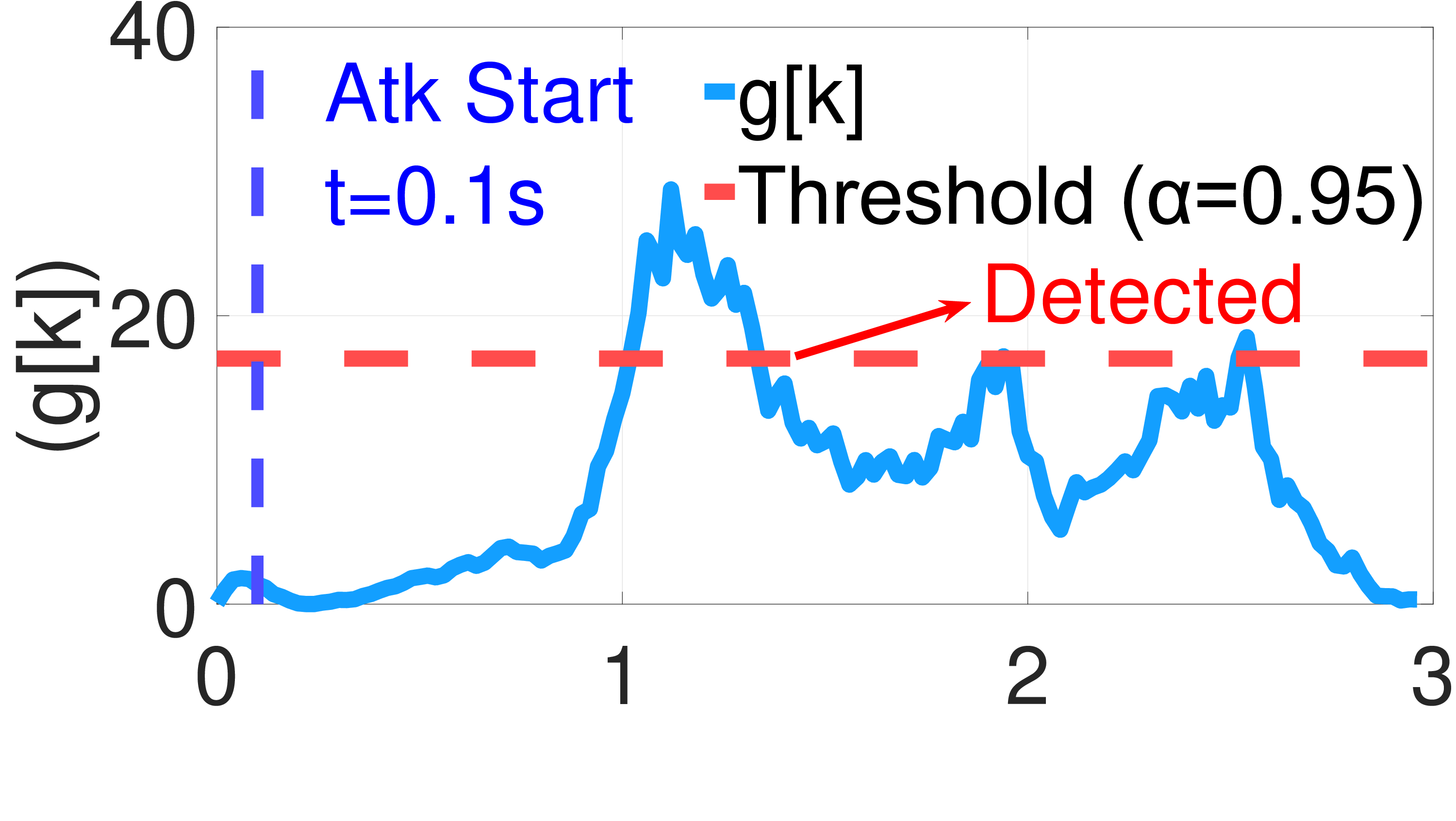}
        \caption{Case \textbf{\emph{(iii)}} Residue Statistics}
        \label{fig:3_res}
    \end{subfigure}
    \begin{subfigure}[b]{0.23\textwidth}
        \centering
        \includegraphics[trim={0cm 2.2cm 0cm 0cm},width=\textwidth,clip]{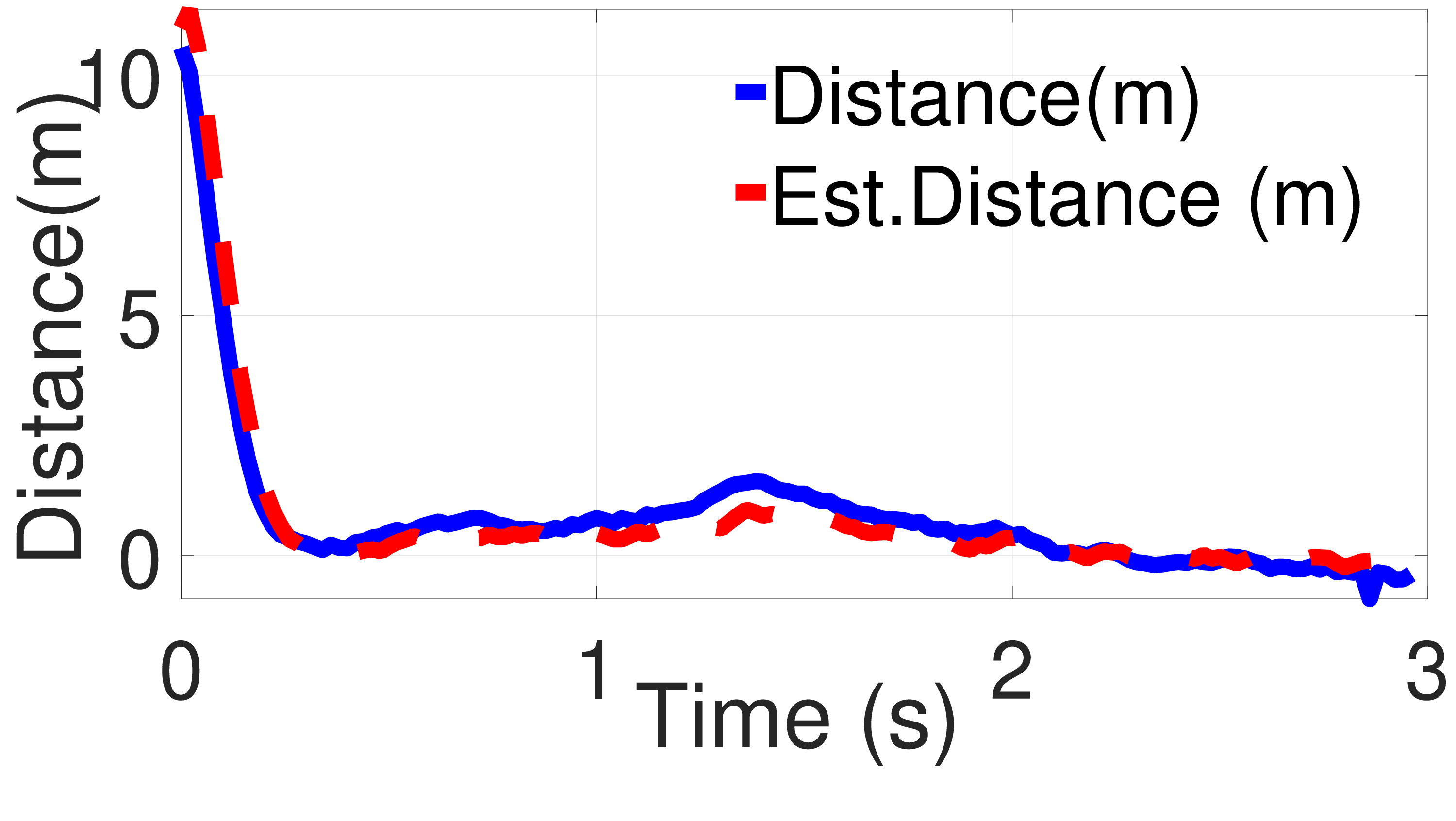}
        \caption{Case \textbf{\emph{(iv)}} Plant State}
        \label{fig:4_state}
    \end{subfigure}
    \hfill
    \begin{subfigure}[b]{0.23\textwidth}
        \centering
        \includegraphics[trim={0cm 1.9cm 0cm 0cm},width=\textwidth,clip]{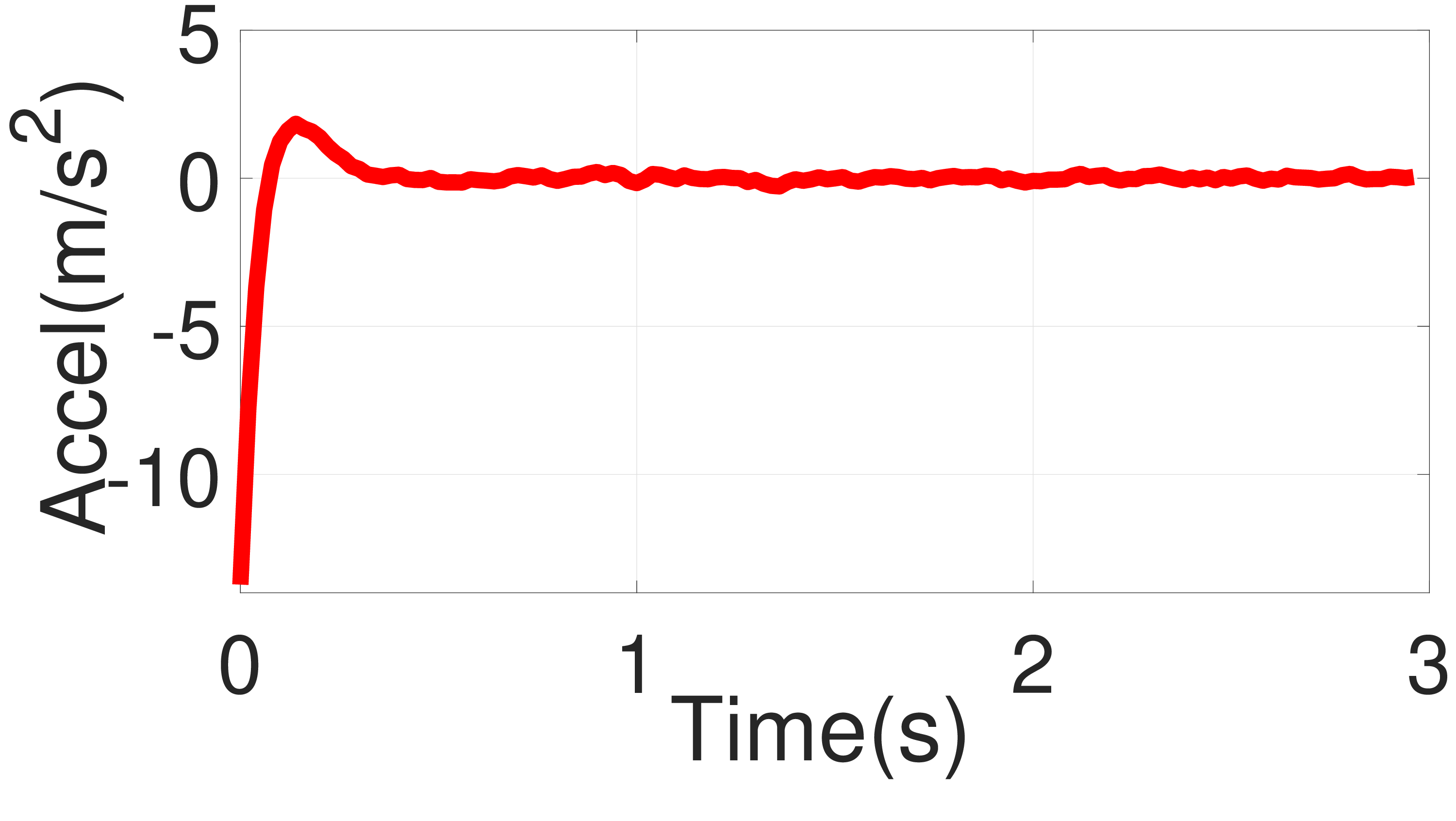}
        \caption{Case \textbf{\emph{(iv)}} Control Input}
        \label{fig:4_control}
    \end{subfigure}
    \hfill
    \begin{subfigure}[b]{0.23\textwidth}
        \centering
        \includegraphics[trim={0cm 1.8cm 0cm 0cm},width=\textwidth,clip]{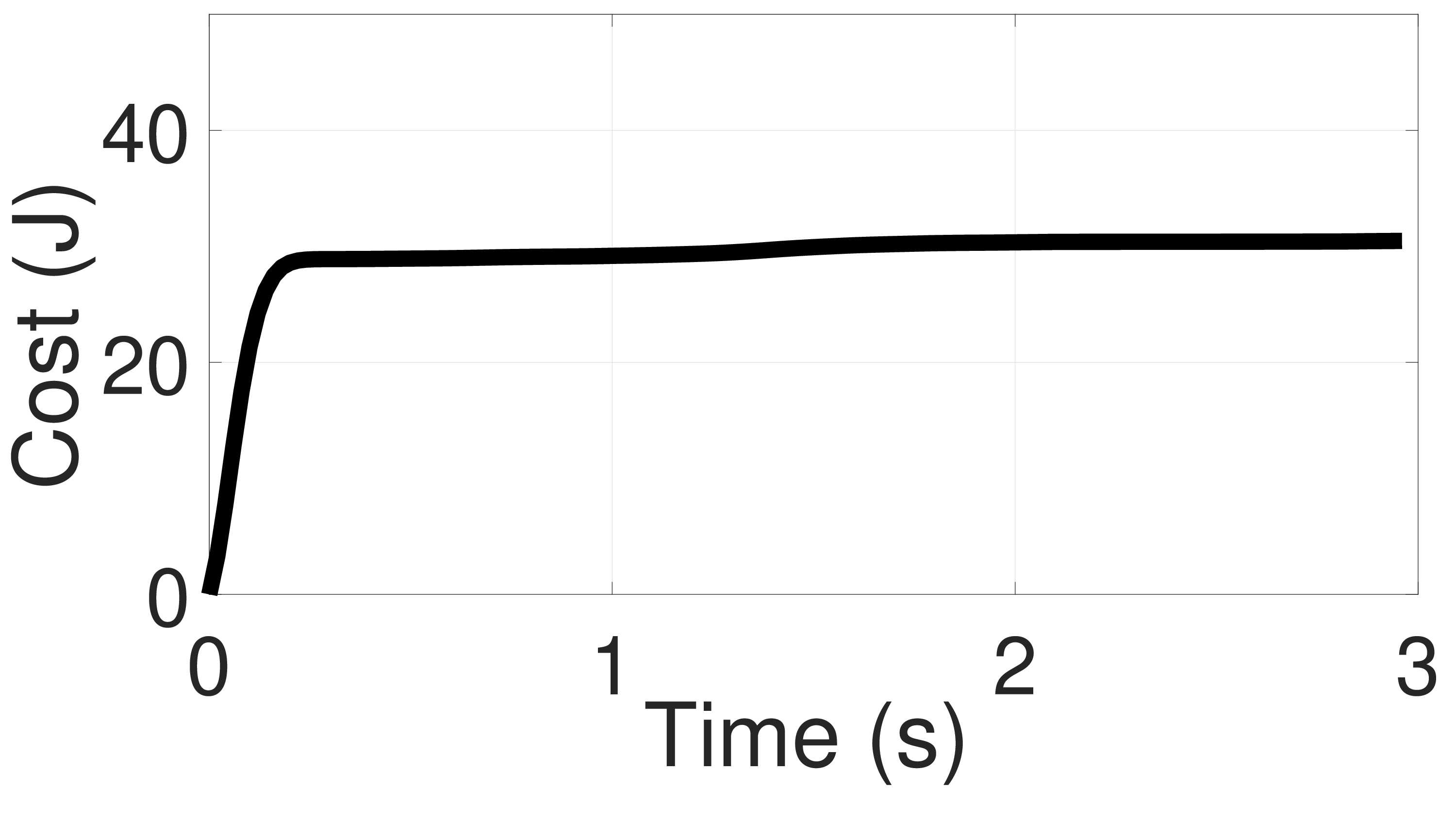}
        \caption{Case \textbf{\emph{(iv)}} Control Cost}
        \label{fig:4_cost}
    \end{subfigure}
    \hfill
    \begin{subfigure}[b]{0.23\textwidth}
        \centering
        \includegraphics[trim={0cm 1.8cm 0cm 0cm},width=\textwidth,clip]{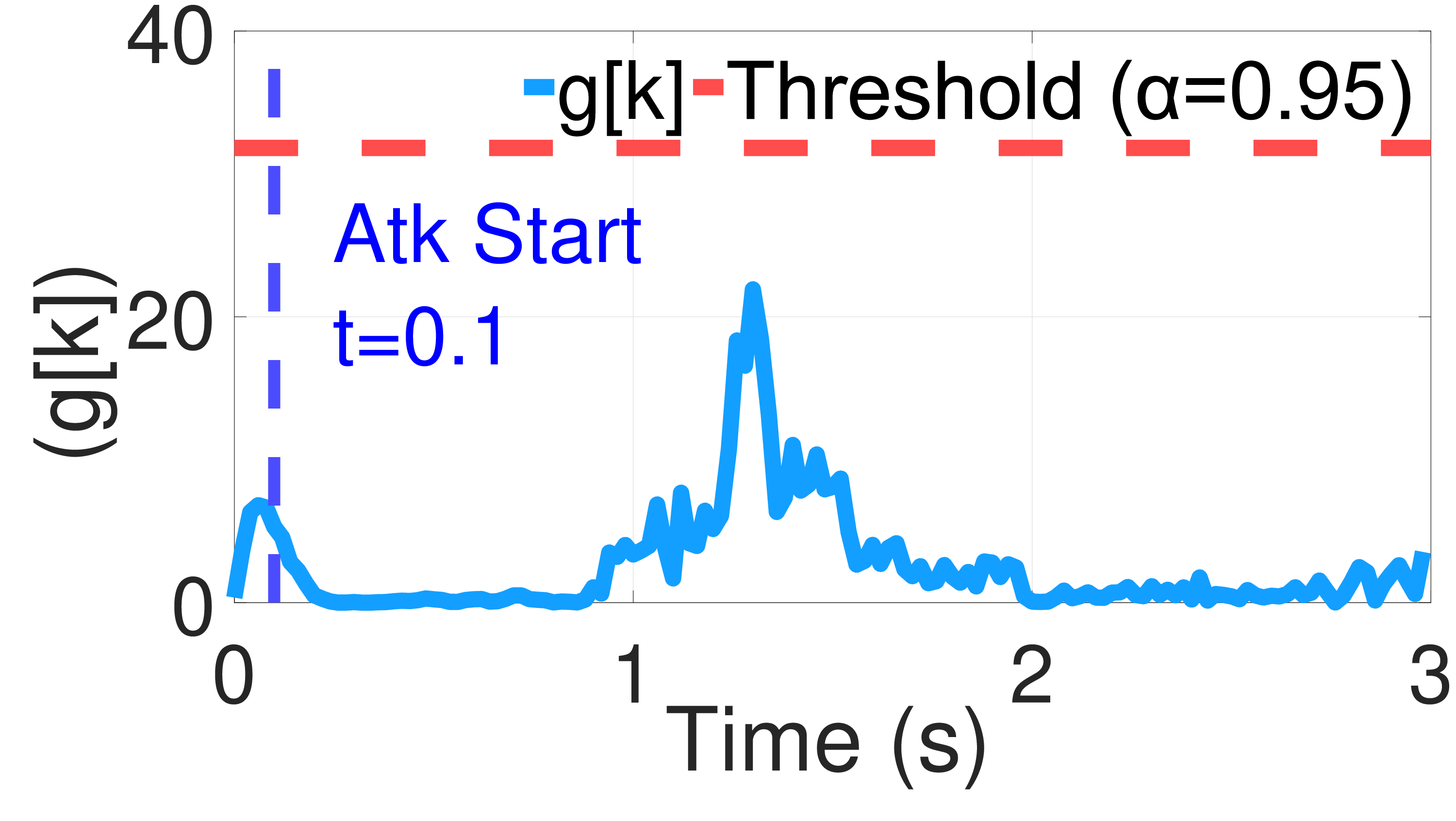}
        \caption{Case \textbf{\emph{(iv)}} Residue Statistics}
        \label{fig:4_res}
    \end{subfigure}
    \vspace{-1mm}
    \caption{Comparison of four runtime cases: (i) baseline PFP without attack, (ii) PFP under posterior SBA, (iii) random job-level delays under posterior SBA, and (iv) \emph{SecureRT} under posterior SBA, showing plant state, control input, cost, and $\chi^2$ residue.}
    \label{fig:control_evaluation}
    \vspace{-7mm}
\end{figure*}
%
\par $\bullet$ \textbf{\textit{Real-time Experimental Setup: }}We demonstrate the applicability of {\em SecureRT} by using a realistic case study consisting of an automotive control task set similar to the works in~\cite{koley2023cad,nasri2019pitfalls,sain2025maars}. We implemented three distinct automotive controllers for our evaluation, namely \emph{\textbf{(i)}} a cruise controller (CC, ensures desired vehicle speed by actuating proper acceleration), \emph{\textbf{(ii)}} an electronic stability program (ESP, manipulates yaw angles by controlling steering angle), and 
\emph{\textbf{(iii)}} a trajectory tracking controller (TTC, regulates deviation from a desired trajectory with acceleration inputs). The design parameters 
of control tasks $\{ \tau_1, \tau_2,\tau_3\}$ corresponding to these controllers, along with other non-control tasks $\{ \tau_4, \tau_5,\tau_6\}$, are tabulated in Tab.~\ref {tab:casestudy}. 
\begin{table}[ht]
\centering
\renewcommand{\arraystretch}{1.2}
\begin{tabular}{|c||c|c|c|c|c|c|c|}
\hline
Task & Task & $T_i$& $C_i$ & $D_i$ & $\Omega_i$  & $\Delta_i^{peak}$ & $\Delta_i^{\text{max}}$ \\
 & Type & [ms] & [ms] & [ms] & [ms]  & [ms]& [ms] \\
\hline
\hline
$\tau_1$ & CC & 10  & 2 & 10  & 8  & 8 & 3 \\
$\tau_2$ & ESP & 40  & 3 & 40  & 7  & 35  & 12 \\
$\tau_3$ & TTC  & 20  & 2 & 20  & 5 & 13  &  8 \\
$\tau_4$ & --   & 100 & 5 & 100 & -- & -- & -- \\
$\tau_5$ &  --  & 100 & 4 & 100 & -- & -- & -- \\
$\tau_6$ &  --  & 40  & 2 & 40  & -- & -- & -- \\
\hline
\end{tabular}
\caption{Automotive Case-Study Task Set}
\vspace{-4mm}
\label{tab:casestudy}
\end{table}
Typically, non-control tasks perform non-critical operations, such as data logging, system supervision, and diagnostic monitoring. These are considered untrusted tasks $\Gamma_U= \{\tau_4, \tau_5,\tau_6\}$, whereas the set of control tasks $\Gamma_C= \{\tau_1,\tau_2,\tau_3\}$ are considered victim tasks. 
%
\par The taskset is scheduled
on a quad-core ARM Cortex-A72 processor in a {\em Raspberry Pi 4 Model B} with 4 GB of RAM, running Raspberry Pi OS (64-bit, kernel v6.12) based on Debian~Bookworm, a linux variant.
Note that the Linux kernel does not inherently provide hard real-time guarantees, even when using real-time thread scheduling policies such as \texttt{SCHED\_FIFO}, \texttt{SCHED\_RR}, or \texttt{SCHED\_DEADLINE} with the POSIX threads (pthreads) API. Therefore, to achieve real-time uniprocessor capabilities, such as an automotive ECU, we make the following changes in the default Raspberry Pi Setup. {\bf(i)} We patched the kernel with \texttt{PREEMPT\_RT} (version~6.12-rt), from Real-Time Linux (RTL) Collaborative Project~\cite{rtl_wiki}.
{\bf(ii)} During the kernel compilation, we disabled the \texttt{CONFIG\_SMP} flag to limit execution on a single core. {\bf(iii)} To utilize 100\% of the CPU time, the variable \texttt{sched\_rt\_runtime\_us} was set to -1. {\bf(iv)} We disabled the CPU frequency scaling governor and set the processor to a fixed clock speed of 1\,GHz using the linux \texttt{cpupower} utility. 
\par \noindent $\bullet$\textit{\textbf{ Design-time Analysis: }} First, we compute the peak job-level delay $\Delta_i^{peak}$ for the control tasks in Tab.~\ref{tab:casestudy} by using Eq.~\ref{eq:peakdelay} (see Sec.~\ref{sec:peak-delay}). These are tabulated in column 7 of Tab.~\ref{tab:casestudy}. 
To determine $\Delta_i^{max}$, we first set the control cost threshold $J_i^{Th}$ at $5\%$ above the nominal cost without any job-level delay. We then model and simulate a delay-aware closed-loop variant of all three control tasks in $\Gamma_C$ in MATLAB R2024b, with incremental job-level delays $\delta \in [0, \Delta_i^{peak}]$. For each delay value, the corresponding control cost $J_i(R_i(\delta))$ was computed using Eq.~\ref{eq:lqr_cost}. Finally, Eq.~\ref{eq:maxdelay} was used to determine the maximum admissible delay $\Delta_i^{max}$. The maximum admissible job-level delays are tabulated in column 8 of Tab.~\ref{tab:casestudy}. 
Given the task set $\mathcal{T}=\Gamma_C \cup \Gamma_U$, their parameters, along with the maximum admissible delays $\Delta_i^{max}$ and AEWs ($\Omega_i$), we compute the optimal job-level delays $\Delta_i$ by solving the MILP formulated in Eq.~\ref{eq:objective_func_lin} using \emph{Gurobi} optimization solver. 
\par For our experiment, we choose TTC ($\tau_3$) as the victim control task. The MILP features $280$ auxiliary, $180$ binary variables, and $1080$ constraints (excluding lower bounds in Constraint~\ref{constr:C1}) to derive the optimal job-level delays for $\tau_3$. The optimization was performed on a 16-core AMD Ryzen 9 8945HS laptop with 16 GB RAM, and the optimal value of $\Delta_3$ was obtained within 1.34 seconds. For $\tau_3$, we obtained the job-level delay sequence $\Delta_3 = \{8,0,5,0,5,8,5,0,5,0\}$. Each delay from this sequence was applied to delay the job inter-arrival times of $\tau_3$ within one hyperperiod (200ms). This reduces the \emph{overlap duration} from 45ms to 18ms, promising a 60\% reduction in time available for FDI on $\tau_3$.
\par \noindent $\bullet$\textit{\textbf{ Runtime Deployment and Analysis: }}
To implement the runtime \emph{PFP-d} scheduler presented in Alg.~\ref{alg:pfp_d}, we use \texttt{SCHED\_FIFO} policy, which maintains a dedicated real-time run queue (\texttt{rt\_rq}) per CPU, always dispatching the highest-priority runnable task first. 
We create POSIX threads for tasks from Tab.~\ref{tab:casestudy} and assign priorities in order. To implement job-level delay sequence $\Delta_3$, we use the timer \texttt{clock\_nanosleep()} to defer only the $k$-th victim job ($\tau_3$) by $\delta_k \in \Delta_3$. The delayed jobs are enqueued to \texttt{rt\_rq} only after their deferral interval expires.
\par We perform our experiment with four distinct cases: \emph{\textbf{(i)}} Baseline scheduling with PFP scheduler without any SBA,\emph{\textbf{ (ii)}} PFP scheduling in the presence of a posterior SBA,
\emph{\textbf{(iii)}} scheduling with random job-level delays following state-of-the-art approach in~\cite{chen2021indistinguishability} under a posterior SBA, and \emph{\textbf{(iv)}} scheduling with the proposed \emph{PFP-d} scheduler using the \emph{SecureRT} framework under a posterior SBA. The simulation duration is $3$s for all four cases. In cases \emph{\textbf{(ii)}}, \emph{\textbf{(iii)}}, and \emph{\textbf{(iv)}}, the attacker task $\tau_4 \in \Gamma_U$ executes a malicious code to launch a posterior SBA from $t=0.1s$, which performs FDI targeting $\tau_3$'s data buffer to inject malicious control input and hamper control performance. The detector’s threshold is set to maintain a false alarm rate $\leq 5\%$.
In Fig.~\ref{fig:control_evaluation}, we have plotted (in $y$-axes) the plant's actual and estimated states (blue solid lines and red dashed lines in Col.~1, respectively), control input (in Col.~2), control cost (in Col.~3) and the $\chi^2$ statistics of the residue (in Col.~4). All plots are plotted against time in seconds ($x$-axis).
\par In case \textit{\textbf{(i)}} (No Attack, No Delay), the estimated system trajectory correctly tracks the plant state (Fig.~\ref{fig:1_state}) and the computed control input (Fig.~\ref{fig:1_control}) stabilizes the system to a steady state. The cost (Fig.~\ref{fig:1_cost}) converges to its nominal value, and the $\chi^2$ statistic $g[k]< Th$ (Fig.~\ref{fig:1_res}).
In case \textit{\textbf{(ii)}} (Posterior Attack, No Delay), the estimated plant trajectory deviates from the actual state (Fig.~\ref{fig:2_state}), due to continuous successful FDI attacks on the control input (Fig.~\ref{fig:2_control}) by $\tau_4$. This causes 
a steady increase in the control cost 
(Fig.~\ref{fig:2_cost}), and the $\chi^2$ statistic $g[k] > Th$ (Fig.~\ref{fig:2_res}).
%
In case \textit{\textbf{(iii)}} (Posterior Attack, Random Delays), we apply random job-level delays sampled from a range $[0,\Delta_v^{peak}]$ following a Laplace distribution as considered by the authors~\cite{chen2021indistinguishability} to job release times of $\tau_3$. As a result, the estimated plant trajectory deviates from the actual state (Fig.~\ref{fig:3_state}) as $\tau_4$ is able to successfully perform FDI attacks at certain instances of the control task $\tau_3$ (Fig.~\ref{fig:3_control}). This causes the $\chi^2$ statistic $g[k]$ to go beyond the threshold $Th$ (Fig.~\ref{fig:3_res}). Moreover, the control cost increases steadily (Fig.~\ref{fig:3_cost}) because the delays are not performance-aware, and FDIs are successful in certain instances.
In case \textit{\textbf{(iv)}} (Posterior Attack, with SecureRT), 
the optimal delay sequence $\Delta_3$, obtained from the design-time analysis is applied to the release times of $\tau_3$ instances. The estimated system trajectory closely tracks the plant state (Fig.~\ref{fig:4_state}). Despite the ongoing SBA on control input by $\tau_4$, the control input (Fig.~\ref{fig:4_control}) stabilizes the system to the steady state, and the overall control cost converges close to its nominal value (Fig.~\ref{fig:4_cost}). 
\par $\bullet$ \textbf{\textit{General Evaluation of Schedulability: }}
\label{sec:exp_geneval}
In our earlier experiment, we demonstrated the effectiveness of \emph{SecureRT} on an automotive task set in Tab.~\ref{tab:casestudy}, comprising high-priority control tasks that implement realistic automotive controllers. While this experiment validated its applicability for a specific real-world case, we extend our evaluation to assess its general applicability to synthetic task sets spanning multiple utilization ranges and include victim tasks from different priority groups. To perform this, we generate synthetic task sets uniformly across ten utilization ranges defined as $[0.02 + 0.1i, 0.18 + 0.1i], i = \{0, 1,...,9\}$. For each range, we created $100$ independent task sets using \emph{RandFixedSum}~\cite{emberson2010techniques}. Each task set consists of $n$ periodic tasks, where $n$ is given as a user input. The period $T_i$ of each task $\tau_i$ is randomly selected from the set $\{5, 10, 20, 50, 100, 200, 1000\}$, and its corresponding WCET $C_i$ is an integer uniformly selected from $[1, 50]$ such that $C_i < T_i$. The $n$ tasks from each set are assigned priorities using the rate-monotonic policy, so that the task with a smaller period has higher priority. We divided every task set into three groups based on their priority levels: \emph{high-priority} (HP) tasks with task priorities between $\{1,...\frac{n}{3}\}$, \emph{medium-priority} (MP) from $\{\frac{n}{3}+1,...,\frac{2n}{3}\}$, and \emph{low-priority} (LP) from $\{\frac{2n}{3}+1,...,n\}$. 
\par For each task set, we select one victim task $\tau_v$ randomly from each priority group. We incrementally vary $\delta \in [0, T_v - C_v]$ and compute WCRT of victim using 
\begin{figure}[!htbp]
    \centering
    \vspace{-2mm}
    \begin{subfigure}[t]{0.48\columnwidth}
        \centering
        \includegraphics[width=\linewidth,clip]{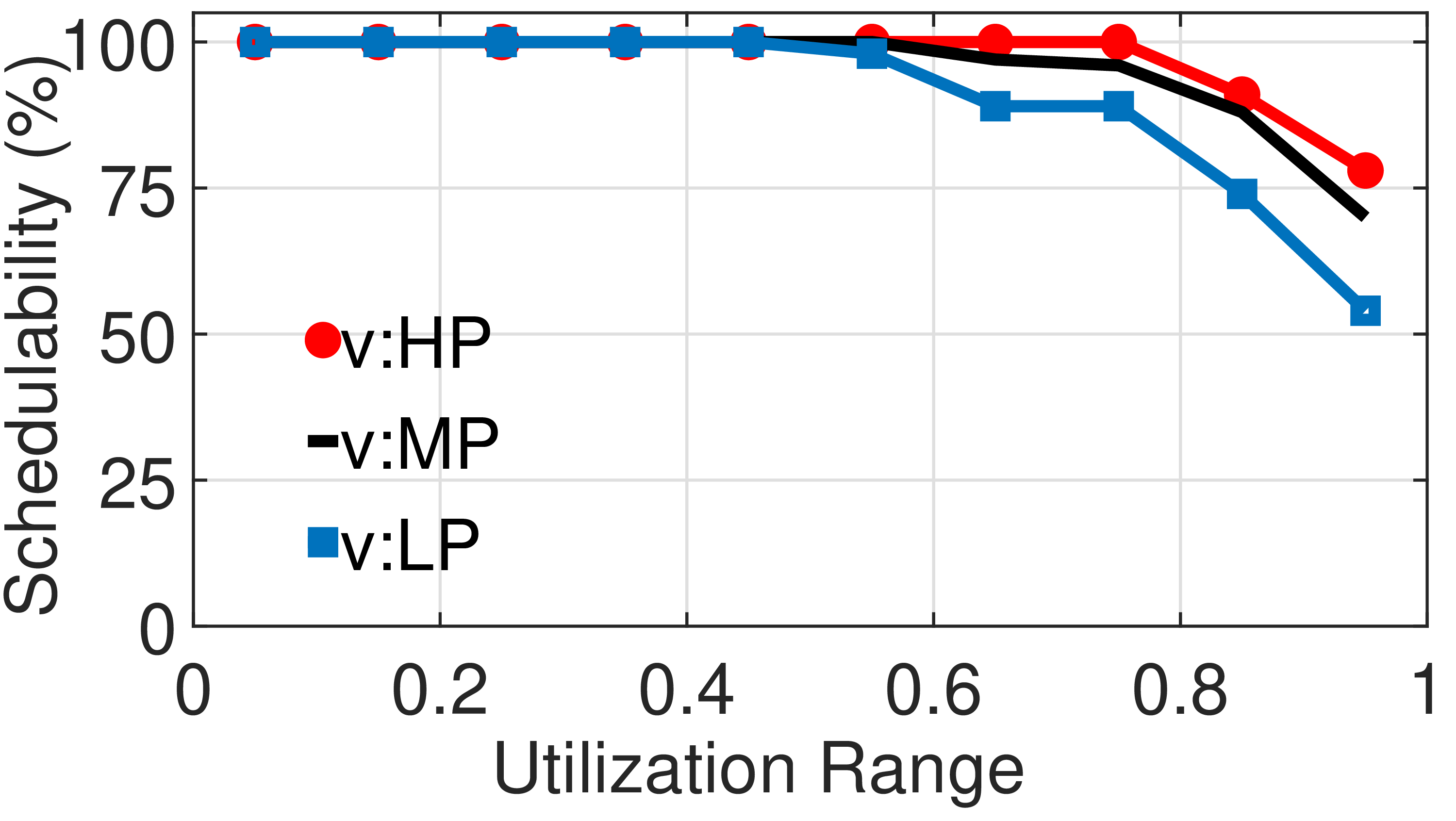}
    \end{subfigure}
    \hfill
    \begin{subfigure}[t]{0.48\columnwidth}
        \centering
        \includegraphics[width=\linewidth,clip]{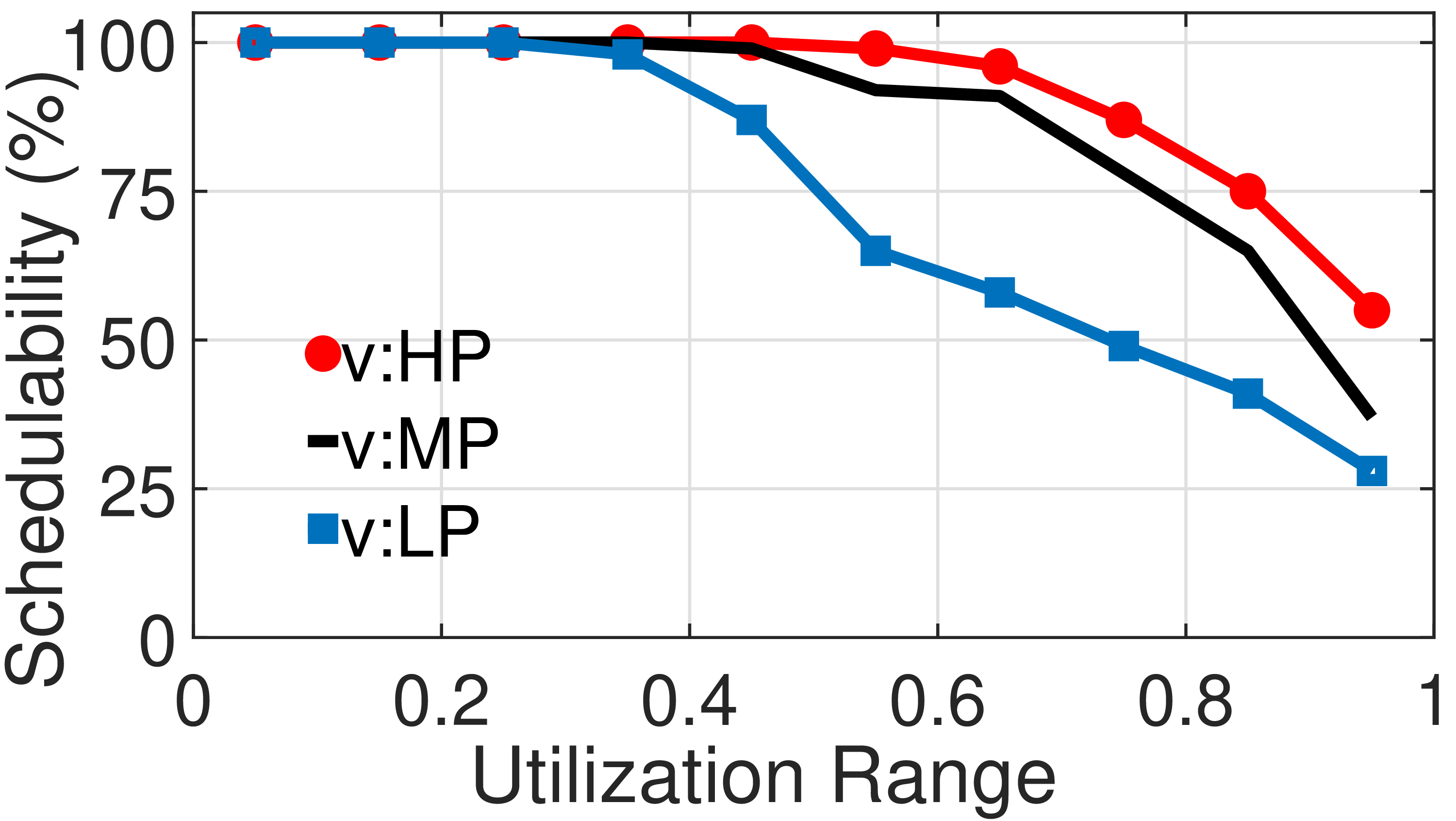}
    \end{subfigure}
    \vspace{-2mm}
    \caption{Schedulability with (a) 5 Tasks, (b) 10 tasks}
    \label{fig:schedulability_5_10}
\end{figure}
\vspace{-2mm}
\begin{wrapfigure}[8]{l}{0.48\columnwidth}
    \centering
    \vspace{-3mm}
    \includegraphics[width=\linewidth,clip]{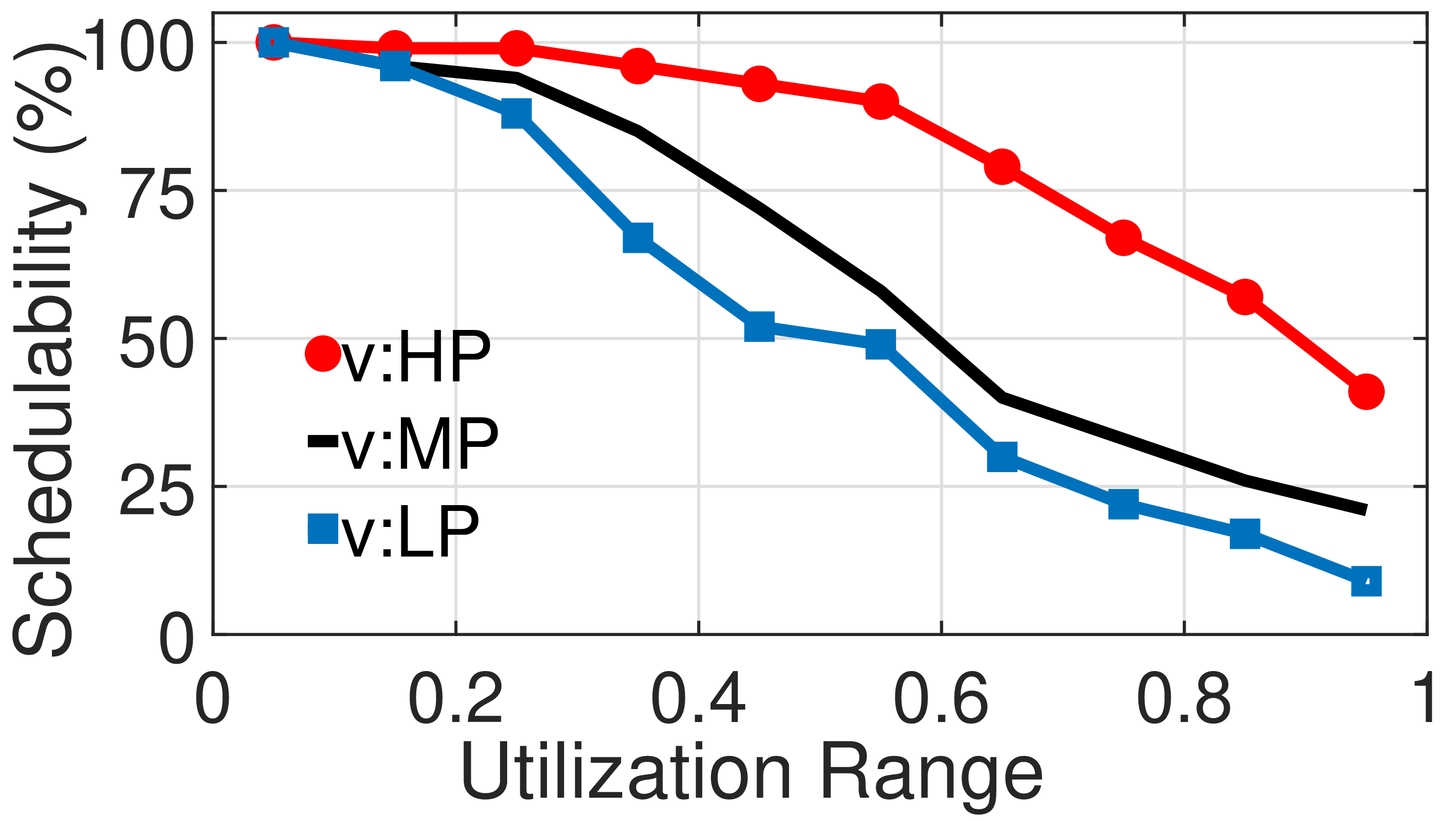}
    \caption{20 Tasks}
    \label{fig:schedulability_20}
    \vspace{-3mm}
\end{wrapfigure}
Eq.~\ref{eq:schedulability_victim} and compute WCRT of non-victim lower priority tasks $lp(\tau_v)$ using Eq.~\ref{eq:scheduability_non_victim}. We consider a task set to be schedulable if Eq.~\ref{eq:peakdelay} gives a positive peak job-level delay $\delta = \Delta_v^{peak} \in [0, T_v - C_v]$ which ensures the applicability of SecureRT. The schedulability percentage for all 100 task sets is computed separately, provided $\tau_v$ is selected from the HP, MP, and LP groups within every utilisation range. Fig.~\ref{fig:schedulability_5_10} shows the schedulability results for $n = 5, 10$, and Fig.~\ref{fig:schedulability_20} for $n= 20$ respectively. The $x$-axis denotes utilization groups, and the $y$-axis shows the percentage of task sets schedulable (out of 100) with $SecureRT$ given $\tau_v \in \text{HP}$ (red circular marked line), $\text{MP}$ (black solid line), and $\text{HP}$ (blue square marked line). We observe that schedulability drops significatly when $\tau_v \in HP$, followed by $\tau_v \in MP$ and $\tau_v \in LP$. Further, with an increasing utilization range and a higher number of tasks, it drops moderately. These results indicate that \emph{SecureRT} remains effective when $\tau_v$ is a relatively higher priority task in the set with low to moderate utilization. In modern CPS applications such as automotive systems, control tasks are usually designed to have higher priorities (shorter periods) and operate under moderate utilization, making \emph{SecureRT} suitable for most practical systems.
\section{Conclusion and Future Work}
\label{sec:conclusion}    
In this work, we propose a novel schedule-based attack mitigation framework, \emph{SecureRT}, that injects a bounded amount of job-level delay at each instance of safety-critical control tasks and minimises the time available to the untrusted task to launch an FDI attack. Unlike state-of-the-art approaches, SecureRT computes optimal delay injection sequences analyzing both schedulability and control performance.
As a natural future extension, we intend to make \emph{SecureRT} applicable to multi-core systems and for dynamic-priority schedulers such as EDF. We also aim to design adaptive job-level delay strategies that learn from system runtime behaviour to mitigate against intelligent schedule-based attackers.
\bibliography{reference.bib}
\bibliographystyle{ieeetr}
\end{document}